\renewcommand{\algorithmicrequire}{\textbf{Input: }}
\newcommand{\End}{\hfill $\square$}
\DeclareMathOperator*{\argmin}{argmin}
 \DeclareMathOperator*{\trace}{trace}
\newcommand{\mbb}[1]{\mathbb{#1 }}
\newcommand{\mbf}[1]{\mathbf{#1}} 
\newcommand{\mcl}[1]{\mathcal{#1}}
\newcommand{\pce}[1]{\textsf{#1}}
\newcommand{\pcecoe}[2]{\textsf{#1}^{#2}}
\newcommand{\relx}{(\omega)}
\newcommand{\trar}[2]{\mathbf{#1}_{[0,{#2}]}}
\newcommand{\set}[1]{\mathbb{#1}}    
\newcommand{\inst}[1]{_{#1}}   
\newcommand{\pred}[2]{_{{#1}|{#2}}}
\newcommand{\splx}[1]{\mcl{L}^2(\Omega, \mathcal{F}, \mu; \mathbb{R}^{#1})}
\newcommand{\spl}{\mcl{L}^2(\Omega, \mathcal{F}, \mu; \mathbb{R})} 
\newcommand{\diff}{\mathop{}\!\mathrm{d}}
\newcommand{\mean}{\mbb{E}}
\newcommand{\var}{\mbb{V}}
\newcommand{\Ucov}{ \widehat{\Sigma}}
 \newcommand{\covar}{\Sigma}
\newcommand{\prob}{\mbb{P}}
\newcommand{\Hankel}{\mcl{H}}
\newcommand{\rinv}{\dagger}
\newcommand{\ini}{_{\text{ini}}} 
\newcommand{\dimx}{{n_x}}
\newcommand{\dimu}{{n_u}}
\newcommand{\dimw}{{n_x}}
\newcommand{\I}{\mathbb{I}}
\newcommand{\N}{\mathbb{N}}
\newcommand{\R}{\mathbb{R}}
\newcommand{\Xf}{\mathbb{X}_\text{f}}
\newtheorem{theorem}{Theorem}
\newtheorem{corollary}{Corollary}
\newtheorem{lemma}{Lemma}
\newtheorem{definition}{Definition}
\newtheorem{remarkmod}{Remark}
\newtheorem{example}{Example}
\newtheorem{assumption}{Assumption}
\newcommand{\xxd}{ \Hankel_1\left(\mbf{x}_{[0,T-1]}\right)}
\newcommand{\xxdp}{ \Hankel_1\left(\mbf{x}_{[1,T]}\right)}
\newcommand{\uud}{ \Hankel_1\left(\mbf{u}_{[0,T-1]}\right)}
\newcommand{\wwd}{ \Hankel_1\left(\mbf{w}_{[0,T-1]}\right)}
\newcommand{\edit}[1]{{\color{black} #1}}
\title{Towards data-driven stochastic predictive control}
\author{Guanru Pan, Ruchuan Ou and Timm Faulwasser$^{\star}$
	\thanks{ $^{\star}$: Corresponding author.}%
	\thanks{Guanru Pan, Ruchuan Ou and Timm Faulwasser are with Institute for Energy Systems, Energy Efficiency and Energy Economics, TU Dortmund University, Dortmund,  Germany
		{\tt\small $\{$guanru.pan,ruchuan.ou$\}$@tu-dortmund.de and timm.faulwasser@ieee.org} }
}
\begin{document}
\maketitle
\thispagestyle{empty}
\pagestyle{empty}

\begin{abstract}
 Data-driven predictive control based on the fundamental lemma by Willems et al.  is frequently considered for deterministic LTI systems subject to measurement noise.  However, little has been done on data-driven stochastic control. In this paper, we propose a data-driven stochastic predictive control scheme for LTI systems subject to possibly unbounded additive process disturbances. Based on a stochastic extension of the fundamental lemma and leveraging polynomial chaos expansions,  we construct a data-driven surrogate Optimal Control Problem (OCP). Moreover, combined with an online selection strategy of the initial condition of the OCP, we provide sufficient conditions for recursive feasibility and for stability of the proposed data-driven predictive control scheme. Finally, two numerical examples illustrate the efficacy and closed-loop properties of the proposed scheme for process disturbances governed by different distributions.
\end{abstract}

\begin{IEEEkeywords}
data-driven system representation, stochastic model predictive control, polynomial chaos expansion, closed-loop propertie
\end{IEEEkeywords}

\section{INTRODUCTION}
The so-called fundamental lemma proposed by Willems et al. in 2005~\cite{Willems2005} is at the core of manifold research efforts on data-driven control~\cite{DePersis19,Coulson2019,Berberich20,Pan21s}.  Its main insight is that the trajectories of any controllable LTI system can be described without explicit parametrization or derivation of a state-space model. Instead, all finite-length input-output trajectories of LTI systems are shown \edit{lying}  in the column space of a Hankel matrix constructed from recorded input-output data. Provided persistency of excitation holds, this data-driven system representation is exact for deterministic LTI systems, \edit{i.e. it is exact in the absence of measurement noise or process disturbances}.

The use of Willems' fundamental lemma for predictive control dates back to the work by Yang and Li~\cite{Yang15}, while later it has been popularized by Coulson et al.~\cite{Coulson2019}. Data-driven predictive control schemes are promising for different applications~\cite{Huang19,Lian21,Berberich21at,Carlet20,schmitz22a}.  So far available results on formal  properties, e.g., distributional robustness~\cite{Coulson21a}, recursive feasibility, and stability~\cite{Berberich20}, are restricted to deterministic LTI systems with and without measurement noise. Besides measurement noise, however, stochastic process \edit{disturbances} might also directly affect the dynamics. In the present paper, we aim to extend deterministic data-driven predictive control towards stochastic LTI systems affected by possibly unbounded additive process disturbances. To this end, we combine two lines of research, i.e., we rely on transferring known model-based results for stochastic predictive control to the data-driven setting. In turn, the data-driven system description is approached via a stochastic extension to the fundamental lemma, which we proposed in an earlier paper~\cite{Pan21s} .

In stochastic predictive control one major challenge is the propagation of uncertainties through system dynamics~\cite{Mesbah16,Farina16,Heirung18}. Polynomial Chaos Expansion (PCE) is a computationally tractable method for uncertainty propagation in Markovian and non-Markovian settings. Its core idea is based on the observation that, under mild assumptions, random variables can be regarded as elements of an $\mcl{L}^2$  probability space. Hence they admit representations in appropriately chosen polynomial bases, see the monograph by Sullivan~\cite{sullivan15introduction} for a general introduction. Moreover, PCE is established in systems and control, see the work by Fagiano et al.~\cite{Fagiano12} and others~\cite{kim13wiener,paulson14fast,Mesbah14}.  It also allows for efficient computation of  statistical moments~\cite{Lefebvre20}.

Another challenging problem in stochastic predictive control is to establish recursive feasibility of the underlying Optimal Control Problem (OCP). Early works rely on the boundedness of the disturbances to establish recursive feasibility~\cite{Cannon10s,Kouvaritakis10e,Korda11s}. For process \edit{disturbances} with unbounded support the underlying OCP can become infeasible in case of large disturbance realizations. Therefore, backup strategies to recover feasibility are  of interest: Cannon et al.~\cite{Cannon09p} achieve feasibility by solving an alternative OCP that minimizes the constraint violation; while Farina et al.~\cite{Farina13p,Farina15a}  solve the original OCP with a modified feasible initial condition. 

In a recent paper~\cite{Pan21s}, we combine the PCE approach to uncertainty propagation with a data-driven non-parametric system representation via the fundamental lemma. Specifically, therein  we \edit{present} a stochastic variant of the fundamental lemma which uses PCE to achieve non-parametric system representation and uncertainty propagation for stochastic LTI systems subject to additive Gaussian or non-Gaussian process \edit{disturbances}. The stochastic fundamental lemma enables propagating uncertainty induced by process disturbances through system dynamics and predicting the statistical distributions 
of outputs over finite horizons \edit{given the distributions of inputs and disturbances}. This is accomplished without explicitly identifying \edit{a model of the underlying system}. Based on the stochastic fundamental lemma, \edit{we present} an intuitive structure for stochastic data-driven predictive control with promising numerical results~\cite{Pan21s}. However, therein we \edit{do} not provide any formal analysis of the closed-loop properties.
In an alternative route towards data-driven stochastic control, Kerz et al.~\cite{Kerz21d} exploit the deterministic fundamental lemma to stochastic predictive control with a tube-based approach. That is, they predict the nominal part of state trajectories by the fundamental lemma while handle the uncertainties by tightening constraints offline.   

This paper proposes a data-driven predictive control scheme for stochastic LTI systems subject to possibly unbounded additive process \edit{disturbances} with feasibility guarantees. Our main contribution are sufficient conditions for stability  and recursive feasibility of the proposed data-driven predictive control algorithm. 

The remainder of the paper is structured as follows: Section~\ref{sec:problem_statement} details settings of the considered stochastic predictive control problem and preliminaries on the data-driven representation of stochastic LTI systems. Section~\ref{sec:OCPS} presents the main results, i.e., based on the data-driven reformulation of the stochastic OCP we analyse the closed-loop properties of the proposed control scheme. Section~\ref{sec:simulation} considers two numerical examples; the paper ends with conclusions in Section~\ref{sec:conclusion}.

\subsubsection*{Notation}
Let $Z: \I_{[0,T-1]} \rightarrow \splx{n_z}$ be a vector-valued random variables and $\splx{n_z}$ is the underlying probability space. 
We denote by $\mean\big[Z\big]$, $\var\big[Z\big]$, $\covar\big[Z\big]$, and $z\doteq Z(\omega)\in \R^{n_z}$ its mean, variance, covariance, and realization, respectively.  Moreover, for a given set $\set{Z}\subseteq \R^{n_z}$, we denote the probability of $Z \in \set{Z}$ as $\prob\big[Z\in \set{Z}\big]$. The vectorization of a sequence $\{Z_k\}_{k=0}^{T-1}$ is written as $\trar{Z}{T-1} \doteq [Z_0^\top,Z_1^\top, \dots,Z_{T-1}^\top]^\top \in \R^{n_z T}$. The PCE coefficient sequence of $\{Z_k\}_{k=0}^{T-1}$  is  denoted by $\pcecoe{z}{j}:  \I_{[0,T-1]} \rightarrow \R^{n_z},j\in \N$. Similarly, we write the vectorization of $\{z_k\}_{k=0}^{T-1}$ and $\{\pcecoe{z}{j}_k\}_{k=0}^{T-1}$ as $\trar{z}{T-1}$ and $\trar{\pce{z}}{T-1}^j$.
Throughout the paper, we denote the identity matrix of size $l$ by $I_l$ and $\|x\|^2_Q \doteq \frac{1}{2}  x^\top Q x$. For any matrix $Q\in \mbb{R}^{n\times m}$ with columns $q^1, \dots, q^m$, we use $\mathrm{colsp}(Q) \doteq \mathrm{span}\left(\{q^1,\dots, q^m\}\right)$.

\section{Preliminaries} \label{sec:problem_statement}
Next, we introduce the considered setting of stochastic LTI systems, recall known results for model-based stochastic MPC, and then revisit the polynomial chaos framework for uncertainty propagation. Finally, we recapitulate the non-parametric description of  stochastic LTI systems via a stochastic extension to the fundamental lemma.

\subsection{Stochastic LTI systems}
We consider stochastic discrete-time LTI systems of the form
\begin{equation}\label{eq:RVdynamics}
X\inst{k+1} = AX\inst{k} +BU\inst{k}+ W\inst{k}, \quad 
X_0=X\ini, \quad k \in \N
\end{equation}
with state  $X\inst{k}\in \mcl L^2(\Omega, \mcl F_k,\mu;\R^{n_x})$,  input $U\inst{k}\in \mcl L^2(\Omega, \mcl F_k,\mu;\R^{n_u})$ and  process \edit{disturbances} $W\inst{k}\in \mcl L^2(\Omega, \mcl F,\mu;\R^{n_x})$, $k \in \N$, where $\Omega$ is the sample space, $\mcl F$ is a $\sigma$-algebra, $\mbb F\doteq (\mcl F_k)_{k \in \N}$ is a stochastic filtration, and $\mu$ is the considered probability measure. The consideration of $\mcl L^2$ random variables implies  that their expectation and covariance are finite. In the underlying filtered probability space $\mcl L^2(\Omega, \mcl F, \mbb F, \mu)$, the $\sigma$-algebra contains all available historical information, or more precisely,
\begin{equation}\label{eq:filtration}
\mcl F_0 \subseteq \mcl F_1 \subseteq ...  \subseteq \mcl F.
\end{equation}
Let $\mbb F$ be the smallest filtration that the stochastic process $X$ is adapted to, i.e.,
$
\mcl F_k = \sigma(X\inst{i},i\leq k)$,
where $\sigma(X\inst{i},i\leq k)$ denotes the $\sigma$-algebra generated by $X\inst{i},i\leq k$. For more details on filtrations we refer to the book of Fristedt and Gray~\cite{fristedt13modern}.
Likewise, the stochastic input $U\inst{k}$ is also modelled as a stochastic process that is adapted to the filtration $\mbb F$, i.e., $U\inst k$ may only depend on the information available up to time $k$, i.e. $X\inst{0}, X\inst{1},...,X\inst{k}$. Note that the influence of the process disturbances $W\inst i,i< k$ is implicitly included via the state recursion~\eqref{eq:RVdynamics}.

Throughout the paper, the underlying probability distribution of the initial condition $X_{\text{ini}}$  are assumed to be known, and we consider $W_k$, $ k\in \N$ to be zero-mean identical independently distributed ($i.i.d.$) random variables with known distributions. 

\begin{remarkmod}[Realization dynamics]
 For a specific uncertainty outcome $\omega \in \Omega$, the realization of $W\inst{k}$ is written as $w\inst{k} \doteq W\inst{k}(\omega) $. Likewise, we denote state and input realizations as $x\inst{k} \doteq X\inst{k}(\omega)$ and $u\inst{k} \doteq U\inst{k}(\omega)$ for $k\in\N$, respectively. 
 
 Moreover, given  a specific initial condition $x\ini \doteq X\ini(\omega)$ with $\omega \in \Omega$,
 and a sequence of disturbance realizations $w_k$, $k\in\N$,  the stochastic system \eqref{eq:RVdynamics} induces the \textit{realization dynamics}
\begin{equation}\label{eq:RelizationDynamics}
x\inst{k+1} = Ax\inst{k} +Bu\inst{k}+ w\inst{k}, \quad  
x\inst{0}=x\ini,\quad k \in \N
\end{equation}
which are satisfied by state, input, and \edit{disturbance} realization sequences.\End
\end{remarkmod}
\begin{assumption}[Standing assumptions]\label{ass:AB}  Consider stochastic LTI system \eqref{eq:RVdynamics} and its realization dynamics \eqref{eq:RelizationDynamics}, we assume $(A,B)$ is a controllable pair, and we suppose that the matrices $A$, $B$ as well as the \edit{disturbance} realizations $w_k$ are unknown, while past state realizations $x_k$ and input realizations $u_k$ are assumed to be known/measured.\End
\end{assumption}

\subsection{Model-based stochastic OCP with terminal constraints}
Next, we recall a conceptual framework for stochastic model-based MPC proposed by Farina et al.~\cite{Farina13p} and others~\cite{Cannon09p,Magni09s}. This scheme \edit{serves} as a blueprint for our further contributions, i.e., the data-driven design of stochastic MPC.

At time instant $k$, given the probabilistic initial condition $\bar{X}_k$ and i.i.d. zero-mean \edit{disturbances} $W\inst{i}, i\in \I_{[k,k+N-1]}$, we consider the following OCP with horizon $N \in \N$
\begin{subequations} \label{eq:stochasticOCP}
\begin{align}
 \min_{
\substack{
\text{for } i \in   \set{I}_{[k,k+N-1]}\\
   X\pred{i+1}{k} \in \splx{\dimx}, \\
   U\pred{i}{k} \in \splx{\dimu}}   
    }  
    \sum_{i=k}^{k+N-1} \mean  \big[ \|X\pred{i}{k}\|^2_Q &+\|U\pred{i}{k}\|^2_R\big] +  \mean \big[ \|X\pred{k+N}{k}\|^2_P\big] \hspace{1pt}  \label{eq:stochasticOCP_obj}\\
\text{subject to }\quad 
X\pred{i+1}{k}=   AX\pred{i}{k}+BU\pred{i}{k}  
+W\inst{i},\label{eq:OCPdynamic}&\quad X\pred{k}{k}  = \bar{X}\inst{k} &\forall i \in \set I_{[k,k+N-1]},\\
\prob \big[X\pred{i}{k}  \in  \set X \big]\geq 1 - \varepsilon_x,  & \quad
\prob \big[U\pred{i}{k} \in  \set U \big]\geq 1 - \varepsilon_u,   &\forall i \in \set I_{[k,k+N-1]},
 \label{eq:chance} \\
\mean \big[X\pred{k+N}{k} \big] \in  \Xf, &\quad \covar \big[X\pred{k+N}{k} \big] \leq \Gamma.  &\quad \label{eq:terminalCons}
\end{align}
\end{subequations}
where $Z\pred{i}{k}$, $Z \in\{X,U,W\}$ denotes the  random variables at $i\in \set I_{[k,k+N-1]}$ predicted at time step $k$. 
We consider a quadratic mean value cost function \eqref{eq:stochasticOCP_obj}  with positive semi-definite $Q= Q^\top = S^\top S\succeq 0$, positive definite $R=R^\top$ and $ P=P^\top \succ 0$. We require detectability of $(A,S)$. Specifically, $P \in \R^{\dimx \times \dimx}$ \edit{is chosen as} the unique positive definite solution to the Discrete-time Algebraic Riccati Equation (DARE)
\begin{subequations}
\begin{equation}\label{eq:DARE}
A^\top P A - P -(A^\top P B)(R+B^\top P B)^{-1}(B^\top P A) +Q = 0
\end{equation}
and the corresponding feedback $K\in \R^{\dimu\times\dimx}$   
\begin{equation}\label{eq:dareK}
K = -(R+B^\top P B)^{-1} B^\top P A.
\end{equation}
\end{subequations}
The feedback $K$   can also be computed via the following optimization problem ~\cite{Feron92n,Dorfler21c}
\begin{subequations}\label{eq:modelbasedP}
\begin{align}
\min_{
\substack{
\tilde{P} \in \R^{\dimx\times\dimx},K\in \R^{\dimx \times \dimu}
}
} &\trace(Q\tilde{P}+K^\top R K  \tilde{P}) \\
\text{subject to }\quad
\tilde{P} \succeq I_{\dimx},&\quad (A+BK)\tilde{P}(A+BK)^\top - \tilde{P} +I_{\dimx} \preceq 0.
\end{align}
\end{subequations}

\edit{The above OCP~\eqref{eq:stochasticOCP}} entails the stochastic system dynamics in~ \eqref{eq:OCPdynamic} with $\bar{X}_k$ as its initial condition. Notice that   $\bar{X}_k$ is modelled as a random variable. As we will see in Section \ref{sec:MPCs}, this gives additional flexibility to address recursive feasibility of the OCP. 
Additionally, we consider chance constraints for the states and the inputs in \eqref{eq:chance}, and the underlying sets $\set{X}\subseteq \set{R}^{\dimx}$ and $\set{U}\subseteq \set{R}^{\dimu}$ are assumed to be closed. Moreover, $1-\varepsilon_x$ and $1-\varepsilon_u$ specify the probabilities with which the---joint in the state dimension but individual in time---chance constraints shall be satisfied. However, notice that---depending on the considered disturbance distribution and the system constraints---the joint consideration of chance constraints for states \textit{and} inputs may jeopardize feasibility of the OCP.

Similar to Farina et al.~\cite{Farina13p}, we consider terminal constraints \eqref{eq:terminalCons} on the mean and covariance of $X\pred{k+N}{k} $.
This ensures the recursive feasibility of the OCP, cf. Section~\ref{sec:MPCs}. Precisely, we consider the terminal region $\Xf \subseteq\R^{\dimx}$ satisfying the following assumption similar to deterministic MPC~\cite{Rawlings20} .

\begin{assumption}[Terminal ingredients]\label{ass:terminalinX} 
Consider a random variable $X\in \splx{\dimx}$ with mean $\mean \big[X\big] \in \R^{\dimx}$ and  covariance $ \covar \big[X\big] \in \R^{\dimx \times \dimx}$. There exists a closed set $\Xf \subseteq \set{X}$ and $0 \in \mathrm{int}(\Xf)$ such that the following conditions hold
\begin{itemize}
\item[(i)]  for all $ \mean\big[X\big] \in \Xf \subseteq \R^{\dimx}, (A+BK)\mean\big[X\big] \in \Xf$,
\item[(ii)] for all $X$ satisfying $\mean \big[X\big] \in  \Xf, \, \covar \big[X\big] \leq \Gamma$, cf.~\eqref{eq:terminalCons}, $\prob\big[X\in \set{X}\big] \geq 1-\varepsilon_x$ and $\prob\big[KX\in \set{U}\big] \geq 1-\varepsilon_u$   hold. \End
\end{itemize}
\end{assumption}

The main difference to the standard assumption of deterministic MPC is that the feedback control input $KX$ is not required to satisfy the input constraints $\mbb{U}$ strictly but only in a probabilistic sense. We note that this assumption can be satisfied choosing a sufficiently small $\gamma >0$ in $\Xf \doteq \{x\in \set{X}| \|x\|_P^2\leq \gamma\}$.
The constraint on the covariance of the terminal state $X\pred{k+N}{k}$ in \eqref{eq:terminalCons}, i.e., the inequality $\covar [X] \leq \Gamma$, is considered element-wise. Moreover, the matrix $\Gamma \in \R^{\dimx \times \dimx}$ is a positive definite and symmetric solution to the  Lyapunov equation
\begin{equation}\label{eq:modelbasedgamma}
 (A+BK)\Gamma(A+BK)^\top -\Gamma+ \Ucov= 0,
\end{equation}
where $\Ucov$ is an upperbound on $\covar[W] \in \R^{\dimw \times \dimw}$.
We remark that \eqref{eq:modelbasedgamma} is related to the steady state of the covariance dynamics~\cite{Farina13p}.

The optimal input sequence to \eqref{eq:stochasticOCP} is denoted as $[U^\star\pred{k}{k},U^\star\pred{k+1}{k},\cdots,U^\star\pred{k+N-1}{k}].$ It is a trajectory of random variables. Hence, conceptually the MPC feedback law acting on the random-variable dynamics~\eqref{eq:RVdynamics} is \edit{to take the first predicted input $U^\star\pred{k}{k} \in \splx{\dimu}$. }
However, it is worth to be remarked that in applications, one is not controlling \eqref{eq:RVdynamics} but rather the  realization dynamics \eqref{eq:RelizationDynamics}. Hence, we postpone the necessary discussion of how to compute the deterministic feedback to Section \ref{sec:MPCs}. We refer to Farina et al.~\cite{Farina13p}
for the \edit{model-based} analysis of closed-loop properties and see Lorenzen et al.~\cite{Lorenzen19} for a stochastic MPC scheme without terminal ingredients.

\subsection{Polynomial chaos expansion of stochastic LTI systems}
After the recap of model-based stochastic predictive control, we now turn towards the computational aspects. To this end, we recall the fundamentals of applying Polynomial Chaos Expansion (PCE) to the stochastic system~\eqref{eq:RVdynamics}. 
PCE as such dates back to Norbert Wiener~\cite{wiener38homogeneous} and a general introduction can be found in the book of Sullivan~\cite{sullivan15introduction}.

 The core idea of PCE is that any $\mathcal{L}^2$ random variable can be expressed in a suitable polynomial basis. To this end, consider an orthogonal polynomial basis $\{\phi^j\}_{j=0}^\infty$ which spans $\spl$, i.e.,
\[
	\langle \phi^i, \phi^j \rangle \doteq \int_{\Omega} \phi^i(\omega)\phi^j(\omega) \diff \mu(\omega) = \delta^{ij}\langle \phi^j,\phi^j\rangle,
\]
where $\delta^{ij}$ is the Kronecker delta. 
The PCE of a real-valued random variable $Z\in \spl$ with respect to the basis $\{\phi^j\}_{j=0}^\infty$ is
\begin{equation}\label{eq:PCE_def}
Z = \sum_{j=0}^{\infty}\pcecoe{z}{j} \phi^j  \text{ with } \pcecoe{z}{j} = \frac{\langle Z, \phi^j \rangle}{\langle \phi^j,\phi^j\rangle},
\end{equation}
where $\pcecoe{z}{j} \in \R$ is called the $j$-th PCE coefficient. 

In numerical implementations the PCE series has to be terminated after a finite number of terms which may lead to truncation errors. For details on truncation errors and error propagation we refer to Field and Grigoriu~\cite{field04accuracy} and Mühlpfordt et al.~\cite{muehlpfordt18comments} Fortunately, random variables that follow some widely used distributions admit exact finite-dimensional PCEs with only two terms in suitable polynomial bases. For Gaussian random variables Hermite polynomials are preferable. For other distributions, we refer to Table~\ref{tab:askey_scheme} for the usual basis choices that allow exact PCEs~\cite{koekoek96askey, xiu02wiener}. Notice that one uses
specific random-variable arguments $\xi \in \splx{n_\xi}$ with $\xi: \Omega \to R^{n_\xi}$ for different polynomial
bases, cf. Table~\ref{tab:askey_scheme}.

\begin{table}[t] 
	\caption{Correspondence of random variables and underlying orthogonal polynomials.}
	\label{tab:askey_scheme}
	\centering
	\begin{tabular}{ccc c}
		\toprule
		Distribution  & Support & Orthogonal basis \edit{$\{\phi^j\}_{j=0}^{\infty}$} & Argument \edit{$\xi\relx$} \\
		\midrule
		Gaussian & $(-\infty, \infty)$ & Hermite & $\mathcal N(0,1)$ \\
		Uniform  & $[a,b]$ & Legendre &$\mathcal U ([-1,1])$ \\
		Beta & $[a,b]$ & Jacobi & $\mathcal{B}(\alpha,\beta,[-1,1])$ \\
		Gamma & $(0,\infty)$ & Laguerre & $\Gamma(\alpha,\beta,(0,\infty))$ \\
		\bottomrule
	\end{tabular}
\end{table}

 A vector-valued random variable $Z\in \splx{n_z}$ is said to admitting an exact finite-dimensional PCE with $L$ terms if
\begin{equation}\label{eq:exactPCE}
Z = \sum_{j=0}^{L-1} \pcecoe{z}{j}\phi^j
\end{equation}
where the $j$-th PCE coefficient of $Z\in\splx{n_z}$ reads
\[
	\pce{z}^j = \begin{bmatrix} \pcecoe{z}{1,j} & \pcecoe{z}{2,j} & \cdots & \pcecoe{z}{n_z,j} \end{bmatrix}^\top.
\]
Moreover,  $\pcecoe{z}{i,j}$ is the $j$-th PCE coefficient of the component $Z^i$ of $Z$. We remark that with a finite PCE, the expected value, variance and covariance of $Z$ can be efficiently calculated from its PCE coefficients~\cite{Lefebvre20} 
\begin{equation}\label{eq:PCEmoments}
\mean\big[Z\big] = \pce{z}^0 \quad \text{,} \quad \var \big[Z\big] = \sum_{j=1}^{L-1} \pce{z}^{j\top}\pce{z}^{j}\langle \phi^j,\phi^j\rangle
\quad \quad \covar \big[Z\big] = \sum_{j=1}^{L-1} \pce{z}^j\pce{z}^{j\top}\langle \phi^j,\phi^j\rangle.
\end{equation}

Based on the knowledge of the distributions of $X_\text{ini}$ and \edit{of} all i.i.d. $W_{k}$, $k\in \N$ we make the following assumption.

\begin{assumption}[Exactness of PCE series]\label{ass:finitenoise}~\\
\begin{itemize}\vspace*{-0.5cm}
\item[(i)] The initial condition $X_\text{ini}$ of \eqref{eq:RVdynamics} and all i.i.d. $W_{k}$, $k\in \N$ admit exact PCEs with finite dimensions $L_{x}$ and $L_w$, i.e.
\[
X_\text{ini}= \sum_{j=0}^{L_x-1}\pce{x}^j \phi_x^j, \qquad W_{k}=  \sum_{j=0}^{L_w-1}\pce{w}^j \phi_{k}^j.
\]
\item[(ii)] The i.i.d. disturbances $W_{k}$ have zero mean for all $k\in \mbb N$ and its (co-)variance is bounded from above by a non-negative \edit{definite} matrix $\Ucov \in \R^{\dimw\times \dimw}$, i.e.
\begin{equation}\label{eq:finitrnoise}
\mbb E\big[W_{k}\big] = \pce{w}_k^0 = 0, \qquad\covar\big[W_{k}\big] = \sum_{j=1}^{L_w-1}\pcecoe{w}{j}\pcecoe{w}{j\top}\langle \phi_{k}^j,\phi_{k}^j\rangle \leq \Ucov,
\end{equation}
where '$\leq$' holds element-wise. 
\End
\end{itemize}
\end{assumption}
Note that the bases $\{\phi_{k}^j\}_{j=0}^{L_w-1}$ of $W_k$ at different time instants $k$ are structurally identical but the realizations used as arguments of the basis functions $\{\phi_{k}^j\relx\}_{j=0}^{L_w-1}$\edit{, i.e. $\xi\relx$,} at different time instants $k$ differ. Hence we distinguish them by the subscript $(\cdot)_{k}$. 

\edit{We remark that the satisfaction of Assumption~\ref{ass:finitenoise}(i) requires an appropriate choice of the polynomial basis. As mentioned above, the distributions listed in Table~\ref{tab:askey_scheme} admit exact and finite PCEs with only two terms in the chosen bases. Moreover, for arbitrary $\mcl L^2$ random variables one can construct appropriate orthogonal basis functions by the Gram-Schmidt process~\cite{Witteveen2006}. Conceptually, for  $Z\in \spl$,  a straightforward choice is the  orthogonal basis with $\phi^0=1$ and $\phi^1=Z-\mean[Z]$ which implies the exact and finite PCE $\pce{z}^0=\mean[Z]$ and $\pce{z}^1=1$. }

Furthermore, assuming finite PCEs for $X_\text{ini}$ and \edit{for} all i.i.d. $W_{k}$, $k\in \N$ we recall a result which appeared previously in Pan et al.~\cite{Pan21s} and which is based on the work of Mühlpfordt et al.~\cite{muehlpfordt18comments} 
\edit{\begin{lemma}[Exact uncertainty propagation via PCE]\label{lem:no_truncation_error}
		Consider the stochastic LTI system \eqref{eq:RVdynamics} and let Assumption \ref{ass:finitenoise} (i) hold. For a finite prediction horizon $N \in \N^+$, suppose  the inputs $U_{k}$, $k\in \I_{[0,N-1]}$  admit exact finite-dimensional PCEs with at most $L$ terms in basis  $\{\phi^j\}_{j=0}^{L-1}$ , where $L$ is given by
		\begin{subequations}\label{eq:OCPbase}
		\begin{equation}\label{eq:terms}
			L = L_{x} + N(L_w-1) \in \N^+,\vspace*{-0.2cm}
		\end{equation}
and	the finite-dimensional basis $\{\phi^j\}_{j=0}^{L-1}$ reads
		\begin{equation}\label{eq:bases}
			\{\phi^j\}_{j=0}^{L-1} = \{1, \{\phi_x^j\}_{j=1}^{L_x-1}, \{\phi_0^j\}_{j=1}^{L_w-1}, \{\phi_1^j \}_{j=1}^{L_w-1}\dots, \{\phi_{N-1}^j \}_{j=1}^{L_w-1}\},
		\end{equation}
		with $1=\phi_x^0 \relx=\phi_{k}^0(\omega)$ for $k \in \I_{[0,N-1]}$ and $\omega \in \Omega$. 
			\end{subequations}
		Then, the states $X_{k}$, $k\in \I_{[0,N]}$ also admit exact finite-dimensional PCEs with at most $L$ terms in the basis~\eqref{eq:OCPbase}.
\end{lemma}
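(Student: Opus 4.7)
The plan is to prove the lemma by induction on the time index $k \in \I_{[0,N]}$, using the linearity of PCE under affine transformations combined with the deliberate construction of the basis in~\eqref{eq:bases}.

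First, I would verify that the candidate basis $\{\phi^j\}_{j=0}^{L-1}$ is indeed an orthogonal system in $\spl$. This uses the standing hypothesis that $X_{\text{ini}}$ and the i.i.d.\ disturbances $W_k$, $k\in\I_{[0,N-1]}$, are mutually independent: within each family $\{\phi_x^j\}_{j=0}^{L_x-1}$ or $\{\phi_k^j\}_{j=0}^{L_w-1}$, orthogonality is assumed; across families, orthogonality follows from independence together with the fact that $\phi_x^j$ for $j\geq 1$ and $\phi_k^j$ for $j\geq 1$ have zero mean (they are orthogonal to the constant $\phi^0=1$). Hence the cardinality $L = L_x + N(L_w-1)$ in~\eqref{eq:terms} corresponds to a valid collection of orthogonal basis functions in which every $\mcl L^2$ random variable that is measurable with respect to $\sigma(X_{\text{ini}}, W_0,\ldots,W_{N-1})$ may in principle be expanded.

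Second, I would proceed by induction. The base case $k=0$ is immediate: by Assumption~\ref{ass:finitenoise}(i), $X_0 = X_{\text{ini}}$ admits the exact PCE $\sum_{j=0}^{L_x-1}\pce{x}^j\phi_x^j$, which extends to an exact expansion in the larger basis~\eqref{eq:bases} by assigning zero PCE coefficients to all basis elements $\phi_k^j$, $k\in\I_{[0,N-1]}$, $j\in\I_{[1,L_w-1]}$. For the inductive step, assume $X_k$ admits an exact PCE $X_k = \sum_{j=0}^{L-1}\pce{x}_k^j\phi^j$ in the basis~\eqref{eq:bases}, and note that $U_k$ does so by hypothesis, while $W_k$ admits an exact PCE in $\{\phi_k^j\}_{j=0}^{L_w-1}$, which is a subset of~\eqref{eq:bases}. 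Applying the dynamics~\eqref{eq:RVdynamics} and collecting coefficients basis function by basis function, one obtains
\begin{equation*}
X_{k+1} = AX_k + BU_k + W_k = \sum_{j=0}^{L-1}\bigl(A\pce{x}_k^j + B\pce{u}_k^j + \pce{w}_k^j\bigr)\phi^j,
\end{equation*}
where $\pce{w}_k^j=0$ for basis elements that do not appear in the expansion of $W_k$. This is an exact finite PCE of $X_{k+1}$ with at most $L$ terms in the basis~\eqref{eq:bases}, completing the induction.

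The main technical subtlety, rather than an obstacle, is careful bookkeeping of how the disjoint basis blocks $\{\phi_x^j\}$ and $\{\phi_k^j\}$ sit inside the common basis $\{\phi^j\}_{j=0}^{L-1}$, and justifying that the PCE coefficients are uniquely determined and transform linearly under $A$ and $B$. Both follow from orthogonality of $\{\phi^j\}_{j=0}^{L-1}$ via the projection formula~\eqref{eq:PCE_def}. No further truncation error arises because the induction never introduces a basis element outside of~\eqref{eq:bases}, which matches the observation in Mühlpfordt et al.~\cite{muehlpfordt18comments} that exact PCEs propagate exactly through linear maps when the basis is closed under the sources of uncertainty entering the dynamics.
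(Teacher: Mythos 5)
Your proof is correct and follows essentially the same route as the paper, which defers the formal argument to Pan et al.\ and M\"uhlpfordt et al.\ but encodes exactly this induction in the Galerkin-projected coefficient dynamics~\eqref{eq:PCEcoesDynamics} and illustrates it in Example~\ref{exampl:PCEdynamic}: the basis~\eqref{eq:bases} is closed under the affine recursion $X_{k+1}=AX_k+BU_k+W_k$, so exactness propagates by linearity. Your additional remarks on orthogonality of the combined basis (via independence and zero mean of the non-constant elements) and on uniqueness of the coefficients are a sound, if not strictly necessary, supplement to the existence argument.
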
}

The previous lemma implies that as the prediction $N$ grows, the required PCE dimension in \eqref{eq:terms} increases linearly. The reason is that the realizations of $W_k$ are independent at each time instant $k \leq N-1$. Yet, in the context of predictive control, we usually consider a fixed and finite prediction horizon $N$. Hence the finite-dimensional polynomial basis \eqref{eq:bases} enables exact propagation of the uncertainties. Naturally, the exactness of the propagation is meant in the usual $\mcl L^2$ equivalence sense.
To this end, we first replace all random variables of \eqref{eq:RVdynamics} with their PCE coefficients with respect to the considered basis. Then we perform a Galerkin projection onto the basis functions $\phi^j$, $j \in \I_{[0,L-1]}$. With given initial conditions $\pcecoe{x}{j}\ini$ for $j \in \I_{[0,L-1]}$ this yields
\begin{align}\label{eq:PCEcoesDynamics}
\pcecoe{x}{j}\inst{k+1} = A\pcecoe{x}{j}\inst{k} +B\pcecoe{u}{j}\inst{k}+ \pcecoe{w}{j}\inst{k},\quad
\pcecoe{x}{j}\inst{0} = \pcecoe{x}{j}\ini,
\quad  j \in \I_{[0,L-1]}, \quad k \in \I_{[0,N-1]}.
\end{align}

\begin{remarkmod}[Expression of filtered stochastic processes with PCE]
With the polynomial basis given in \eqref{eq:bases}, the causality/non-antipacitivity of the filtration \eqref{eq:filtration} implies that the PCE coefficients of the inputs satisfy 
\begin{subequations}\label{eq:causality_PCE}
\begin{equation} \label{eq:causality_input}
	\pce{u}_k^{j} = 0, \forall j\in \I_{[L_{x}+k(L_w-1),L-1]},\, \forall k \in \I_{[0,N-1]},
\end{equation} 
 We remark that the causality of $X$ trivially holds when \eqref{eq:causality_input} is imposed on systems dynamics \eqref{eq:PCEcoesDynamics}, that is 
\begin{equation}\label{eq:causality_state}
	\pce{x}_k^{j} = 0, \forall j\in \I_{[L_{x}+k(L_w-1),L-1]},\, \forall k \in \I_{[0,N-1]},
\end{equation} 
\end{subequations} see also the discussion by Ou et al.~\cite{Ou21} 
\End
\end{remarkmod}

We conclude the introduction of uncertainty propagation via PCE with a simple example illustrating how Lemma~\ref{lem:no_truncation_error}, the PCE coefficients dynamics \eqref{eq:PCEcoesDynamics}, and the causality condition \eqref{eq:causality_PCE} work.
\begin{example}[PCE coefficient dynamics]\label{exampl:PCEdynamic}
Consider the system $X_{k+1} = X_{k} + U_k +W_k$. Suppose at each time step $k$, we consider the stochastic input $U_k = -0.5 X_k$ to satisfy the filtration \eqref{eq:filtration} and $W_k = 0+1\cdot \phi_{k}$, then for the initial condition $X_{\text{ini}}=X_0= 1 + \phi_x$ the PCEs of all predicted random variables for state, input, and disturbance read
\begin{align*}
U_0&=-0.5 -0.5\phi_x  & W_0&= \phi_{0} &     X_1 &= 0.5 + 0.5\phi_x + \phi_{0}     \\
U_1&=-0.25 -0.25\phi_x-0.5 \phi_{0}  & W_1&= \phi_{1} &  X_2& = 0.25 + 0.25 \phi_x + 0.5 \phi_{0} + \phi_{1} \\
&\vdots& &\vdots&&\vdots&\\
U_{N-1}&=-0.5 X_{N-1}  & W_{N-1}&= \phi_{N-1} & X_N &= 0.5^N + 0.5^N \phi_x + \sum_{k=0}^{N-1} (0.5)^{N-k-1} \phi_{k}.  
\end{align*}
Observe that the random variables \edit{$U_{[0,N-1]}$, $W_{[0,N-1]}$, and $X_{[0,N]}$}
 can be represented in the finite-dimensional polynomial basis $\{\phi^j\}_{j=0}^{L-1}=\{1, \phi_x, \phi_{0},\dots,\phi_{N-1}\}$ with $L_x=L_w=2$ and $L  =  N+2$, and the PCE coefficients for all basis functions $\phi^j$, $j \in \I_{[0,L-1]}$, follow the dynamics \eqref{eq:PCEcoesDynamics}. Note that the causality condition \eqref{eq:causality_PCE} is also satisfied. Finally, we remark that, while in principle one could choose any arbitrary basis for $U_k$---which then implies that the corresponding direction have to be included in $\{\phi^j\}_{j=0}^{L-1}$---, the linear nature of~\eqref{eq:RVdynamics} suggests to express $U_k$ in the basis induced by $W_k$ and $X_{\text{ini}}$.  
\End
\end{example}


\subsection{Data-driven representation of stochastic LTI systems}
The previous subsections have recalled model-based stochastic MPC and the basics of polynomial chaos. However, as the goal of this paper is the analysis of data-driven stochastic MPC, we now turn towards data-driven representations of stochastic LTI systems as proposed in our previous paper~\cite{Pan21s}.

Therein, we have shown that the state and input trajectories---either for random variables \eqref{eq:RVdynamics}, PCE coefficients \eqref{eq:PCEcoesDynamics}, or the realizations \eqref{eq:RelizationDynamics}---can be represented in non-parametric fashion using recorded state, input, disturbance realizations trajectories.\footnote{The results of Pan et al.~\cite{Pan21s} are formulated in the input-output setting and hence they are readily translated to the case of state feedback considered here. } This insight, which is formalized as a stochastic variant of the fundamental lemma~\cite{Pan21s}, extends the classical result of Willems et al.\cite{Willems2005} The underlying key observation is that the dynamics of random variables \eqref{eq:RVdynamics}, PCE coefficients \eqref{eq:PCEcoesDynamics}, and realizations \eqref{eq:RelizationDynamics} share the same system matrices $(A,[B ~I_{\dimx}])$. 

At this point, we first assume that the past \edit{disturbance} realizations are available, i.e., they are a posteriori measurable. We comment on the estimation of  \edit{unmeasured disturbance} realizations in Section~\ref{remark:noiseEst} below.
 
\begin{definition}[Persistency of excitation \cite{Willems2005}] Let $T, N \in \set{N}^+$. A sequence of real-valued inputs $\trar{u}{T-1}$ is said to be persistently exciting of order $N$ if the Hankel matrix
\begin{equation*}
\Hankel_N(\trar{u}{T-1}) \doteq \begin{bmatrix}
u\inst 0  & u\inst 1 &\cdots& u\inst{T-N} \\
u\inst 1  & u\inst 2 &\cdots& u\inst{T-N+1} \\
\vdots& \vdots & \ddots & \vdots \\
u\inst{N-1}&u\inst{N}& \cdots  & u\inst{T-1} \\
\end{bmatrix}
\end{equation*}
is of full row rank.  \End
\end{definition}

Consider the stochastic LTI system \eqref{eq:RVdynamics}, the PCE-coefficient dynamics \eqref{eq:PCEcoesDynamics}, and the realization dynamics \eqref{eq:RelizationDynamics}. The corresponding trajectory tuples are $\trar{(X,U,W)}{T-1}$, $\trar{\pce{(x,u,w)}}{T-1}^j, \, j \in \I_{[0,L]}$, and $\trar{(x,u,w)}{T-1}$ respectively. 
\begin{lemma}[Fundamental lemma for stochastic LTI systems]\label{lem:RVfundamental} 
Let Assumption~\ref{ass:AB} and Assumption~\ref{ass:finitenoise} (i) hold. Consider the stochastic LTI system \eqref{eq:RVdynamics}, its corresponding dynamics of PCE coefficients~\eqref{eq:PCEcoesDynamics}, and the realizations~\eqref{eq:RelizationDynamics}. Given a $T$-length realization trajectory tuple $\trar{(x,u,w)}{T-1}$ of \eqref{eq:RelizationDynamics}.
Let $\trar{(u,w)}{T-1}$ be persistently exciting of order $\dimx +N$. Then the following statements hold:
\begin{subequations}
\begin{itemize}
\item[(i)] $\trar{(\tilde{x},\tilde{u},\tilde{w})}{N-1}$ is  a state-input-\edit{disturbance} realization trajectory tuple  of \eqref{eq:RelizationDynamics} of length $N$ if and only if there exists a $g\in \R^{T-N+1} $ such that
\begin{equation} \label{eq:realization_funda}
\Hankel_N(\trar{z}{T-1})g= \trar{\tilde{z}}{N-1}
\end{equation} 
holds for all $(\mbf{z}, \tilde{\mbf{z}})\in \{(\mbf{x},\tilde{\mbf{x}}), (\mbf{u},\tilde{\mbf{u}}), (\mbf{w},\tilde{\mbf{w}})\} $.
\item[(ii)]
For all $j \in \I_{[0,L-1]}$, $\trar{ (\tilde{\pce x}, \tilde{\pce u}, \tilde{\pce w})}{N-1}^j $ is  a state-input-\edit{disturbance}  PCE coefficient trajectory tuple of \eqref{eq:PCEcoesDynamics} of length $N$ if and only if there exists a $\pcecoe{g}{j}\in \R^{T-N+1}$ such that 
\begin{equation} \label{eq:mixed_funda}
\Hankel_N(\trar{z}{T-1}) \pcecoe{g}{j}= \trar{ \tilde{\pce z}}{N-1}^j
\end{equation} holds for all $(\mbf{z}, \tilde{\pce{z}})\in \{(\mbf{x},\tilde{\pce{x}}), (\mbf{u},\tilde{\pce{u}}), (\mbf{w},\tilde{\pce{w}})\} $ and  all $j \in \I_{[0,L-1]}$.
\item[(iii)]
$ (\tilde{\mbf{X}}, \tilde{\mbf{U}}, \tilde{\mbf{W}})_{[0,N-1]}$ is  a state-input-\edit{disturbance}  random variable trajectory tuple of \eqref{eq:RVdynamics} of length $N$ if and only if there exists $G \in \splx{T-N+1} $ such that 
\begin{equation} \label{eq:RVfunda}
\Hankel_N(\trar{z}{T-1}) G=\tilde{\mbf{Z}}_{[0,N-1]}
\end{equation} 
holds for all $(\mbf{z}, \tilde{\mbf{Z}})\in \{(\mbf{x},\tilde{\mbf{X}}), (\mbf{u},\tilde{\mbf{U}}), (\mbf{w}, \tilde{\mbf{W}})\} $.\End
\end{itemize}
\end{subequations}
\end{lemma}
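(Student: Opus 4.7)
The plan is to exploit the fact that the realization dynamics~\eqref{eq:RelizationDynamics}, the PCE-coefficient dynamics~\eqref{eq:PCEcoesDynamics}, and the random-variable dynamics~\eqref{eq:RVdynamics} all share the same deterministic matrices $(A,[B\,\, I_{\dimx}])$ and differ only in the nature of the initial conditions and forcing signals (deterministic realizations, deterministic PCE coefficients, and $\mcl{L}^2$ random variables, respectively). Accordingly, a single Hankel matrix built from one recorded realization trajectory characterises the length-$N$ behaviour of this underlying LTI system in all three settings, and the three statements follow by lifting the classical Willems result through the deterministic PCE layer up to random variables.

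For part~(i), I would apply the classical fundamental lemma of Willems et al.~\cite{Willems2005} to~\eqref{eq:RelizationDynamics} with the augmented input $[u_k^\top\,\, w_k^\top]^\top$. Two hypotheses must be verified: controllability of $(A,[B\,\, I_{\dimx}])$, which is trivial since the $I_{\dimx}$ block alone makes the pair reachable (Assumption~\ref{ass:AB} already provides it); and persistency of excitation of order $\dimx+N$ of the joint sequence $\trar{(u,w)}{T-1}$, which is a direct assumption. The classical result then yields that the column span of $\Hankel_N$ built from realization data equals the set of all length-$N$ realization trajectories of~\eqref{eq:RelizationDynamics}, which is precisely~\eqref{eq:realization_funda}.

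For part~(ii), the key observation is that, for each fixed $j\in \I_{[0,L-1]}$, the PCE-coefficient dynamics~\eqref{eq:PCEcoesDynamics} are themselves realization-type dynamics of the \emph{same} deterministic LTI system $(A,[B\,\, I_{\dimx}])$, only driven by different initial conditions $\pcecoe{x}{j}\ini$ and forcing sequences $(\pcecoe{u}{j},\pcecoe{w}{j})$. Hence any length-$N$ tuple $\trar{(\tilde{\pce x},\tilde{\pce u},\tilde{\pce w})}{N-1}^j$ satisfying~\eqref{eq:PCEcoesDynamics} is a length-$N$ trajectory of the same underlying LTI system and, by part~(i), lies in the column span of the Hankel matrix constructed from the recorded realization. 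The reverse direction follows because any vector in this column span gives, upon Hankel multiplication, a sequence that satisfies~\eqref{eq:RelizationDynamics} and hence, as a deterministic linear difference equation with identical matrices, also fits~\eqref{eq:PCEcoesDynamics}.

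For part~(iii), I would lift the deterministic representations to the random-variable level using the exact finite PCEs provided by Assumption~\ref{ass:finitenoise}(i) and Lemma~\ref{lem:no_truncation_error}. In the ``only if'' direction, any tuple $\trar{(\tilde{\mbf X},\tilde{\mbf U},\tilde{\mbf W})}{N-1}$ satisfying~\eqref{eq:RVdynamics} admits an exact finite PCE in the basis~\eqref{eq:bases} with at most $L$ terms; part~(ii) supplies, for each $j\in \I_{[0,L-1]}$, a deterministic vector $\pcecoe{g}{j}\in \R^{T-N+1}$ with $\Hankel_N(\trar{z}{T-1})\pcecoe{g}{j}=\trar{\tilde{\pce z}}{N-1}^j$, and setting $G \doteq \sum_{j=0}^{L-1}\pcecoe{g}{j}\phi^j$ defines a random variable in $\splx{T-N+1}$ (a finite sum of $\mcl{L}^2$ elements) such that $\Hankel_N(\trar{z}{T-1})G = \sum_{j}\Hankel_N(\trar{z}{T-1})\pcecoe{g}{j}\phi^j = \sum_{j}\trar{\tilde{\pce z}}{N-1}^j\phi^j = \tilde{\mbf Z}_{[0,N-1]}$ by linearity of the Hankel action. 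The ``if'' direction is the mirror image: Galerkin-project~\eqref{eq:RVfunda} onto each $\phi^j$ to recover~\eqref{eq:mixed_funda}, invoke part~(ii) to obtain that each coefficient trajectory satisfies~\eqref{eq:PCEcoesDynamics}, and sum the resulting identities weighted by $\phi^j$ to recover~\eqref{eq:RVdynamics}. The main subtlety I anticipate is the bookkeeping of the Galerkin projection and the argument that the constructed $G$ indeed lies in $\mcl{L}^2$; the finite exactness granted by Lemma~\ref{lem:no_truncation_error} makes both points essentially routine but is the crux that prevents the argument from degenerating into a statement about truncation errors.
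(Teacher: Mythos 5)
Your proposal is correct and follows essentially the same route the paper takes (by reference to Willems et al.\ for assertion (i) and to Corollary 2 and Lemma 4 of Pan et al.\ for assertions (ii) and (iii)): the shared matrices $(A,[B~I_{\dimx}])$ reduce (ii) to (i), and the exact finite PCE of Assumption~\ref{ass:finitenoise}(i) lifts (ii) to (iii) via $G=\sum_j \pcecoe{g}{j}\phi^j$. The only minor remark is that the ``if'' direction of (iii) is obtained more directly by evaluating \eqref{eq:RVfunda} realization-wise and invoking (i) for each $\omega$, which sidesteps the Galerkin bookkeeping you flag as the main subtlety.
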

Assertion (i) follows from the proof of the deterministic fundamental lemma Theorem 1 by Willems et al.~\cite{Willems2005} The proof is originally given in the behavioral language, while De Persis et al. \cite{DePersis19} use state-space concepts. Assertion (ii)  is presented in Corollary 2 of Pan et al. \cite{Pan21s} in the input-output setting and Assertion (iii) follows from Lemma 4 of the same paper.

We see that Lemma \ref{lem:RVfundamental} cannot be applied to reformulate OCP \eqref{eq:stochasticOCP} directly, since in \eqref{eq:stochasticOCP} we predict a state trajectory of length $N+1$ while the input-\edit{disturbance} trajectories are of length $N$. The reason is the $N$-th input and disturbance are not considered neither in objective nor in the constraints. 
Therefore, we extend the statement (iii) of Lemma \ref{lem:RVfundamental} as follows:

\begin{lemma}[Hankel matrix representation for state-feedback MPC]\label{pro:preXN} Let Assumption~\ref{ass:AB} and \ref{ass:finitenoise} (i) hold. Consider a input-disturbance realization tuple $\trar{(u,w)}{T-1}$ and its corresponding  state  trajectory $\trar{x}{T}$ of \eqref{eq:RelizationDynamics}. Let $\trar{(u,w)}{T-1}$ be persistently exciting of order $\dimx +N+1$. Then $\tilde{\mbf{X}}_{[0,N]}$ is a corresponding state trajectory of $(\tilde{\mbf{U}},\tilde{\mbf{W}})_{[0,N-1]}$ w.r.t. \eqref{eq:RVdynamics} if and only if there exists $G \in \splx{T-N+1} $ such that
\begin{equation}\label{eq:preXN}
\begin{bmatrix}
\Hankel_{N+1}(\trar{x}{T})\\
\Hankel_{N}(\trar{u}{T-1})\\
\Hankel_{N}(\trar{w}{T-1})\\
\end{bmatrix} G = \begin{bmatrix}
\tilde{\mbf{X}}_{[0,N]}\\
\tilde{\mbf{U}}_{[0,N-1]}\\
\tilde{\mbf{W}}_{[0,N-1]}\\
\end{bmatrix}.
\end{equation}
\end{lemma}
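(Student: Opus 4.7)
My plan is to reduce Lemma~\ref{pro:preXN} to Lemma~\ref{lem:RVfundamental}(iii) and exploit the linearity of the state recursion \eqref{eq:RVdynamics}. The essential observation is that prescribing the inputs and disturbances up to time $N-1$ together with the state $\tilde{\mbf X}_{N-1}$ already uniquely determines $\tilde{\mbf X}_N$ through the dynamics, so the ``extra'' state row does not enlarge the set of admissible trajectories beyond what Lemma~\ref{lem:RVfundamental}(iii) already parametrises for horizon $N$.

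For the sufficiency direction (the ``if'' part), I would note that each of the $T-N+1$ columns of the stacked Hankel matrix in \eqref{eq:preXN} is a genuine length-$(N+1)$ state slice together with matching length-$N$ input and disturbance slices extracted from the recorded realisation trajectory, hence a particular trajectory of \eqref{eq:RelizationDynamics}. Since \eqref{eq:RVdynamics} is linear in states, inputs, and disturbances, any $\mcl L^2$ random-variable linear combination of these columns remains a valid trajectory, mirroring the classical column-space argument of Willems but now carried out in the Hilbert space $\splx{\dimx}$.

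For the necessity direction, I would first apply Lemma~\ref{lem:RVfundamental}(iii) with prediction horizon $N$; this is available because persistency of order $\dimx+N+1$ trivially implies persistency of order $\dimx+N$. It yields $G\in\splx{T-N+1}$ satisfying
\[
\Hankel_N(\trar{x}{T-1})G = \tilde{\mbf X}_{[0,N-1]},\ \ \Hankel_N(\trar{u}{T-1})G = \tilde{\mbf U}_{[0,N-1]},\ \ \Hankel_N(\trar{w}{T-1})G = \tilde{\mbf W}_{[0,N-1]}.
\]
The only block left to verify in \eqref{eq:preXN} is the bottom row of the state Hankel, namely $\Hankel_1(\mbf{x}_{[N,T]})G = \tilde{\mbf X}_N$. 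Column by column, the recorded data satisfies $x_{k+N}=Ax_{k+N-1}+Bu_{k+N-1}+w_{k+N-1}$ for $k=0,\dots,T-N$, which gives the matrix identity
\[
\Hankel_1(\mbf x_{[N,T]}) = A\,\Hankel_1(\mbf x_{[N-1,T-1]}) + B\,\Hankel_1(\mbf u_{[N-1,T-1]}) + \Hankel_1(\mbf w_{[N-1,T-1]}).
\]
Multiplying on the right by $G$ and substituting the three horizon-$N$ identities collapses the right-hand side to $A\tilde{\mbf X}_{N-1}+B\tilde{\mbf U}_{N-1}+\tilde{\mbf W}_{N-1}$, which equals $\tilde{\mbf X}_N$ by \eqref{eq:RVdynamics}. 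Stacking everything recovers \eqref{eq:preXN}.

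The main obstacle I anticipate is purely conceptual: reconciling the horizon mismatch between the state block (length $N+1$) and the input/disturbance blocks (length $N$) in \eqref{eq:preXN}, since Lemma~\ref{lem:RVfundamental}(iii) is stated symmetrically. Once the linear identity for the additional state row is recognised, everything collapses and no further technical difficulty arises; as a byproduct one sees that persistency of order $\dimx+N$ is in fact enough, the stronger hypothesis merely keeping notation consistent with later use of the lemma inside the stochastic OCP.
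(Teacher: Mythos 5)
Your proof is correct, but it follows a genuinely different route from the paper's. The paper proceeds ``top down'': it appends arbitrary samples $u_T,w_T$ to the recorded data so that Lemma~\ref{lem:RVfundamental}(iii) can be invoked at horizon $N+1$, and then deletes the block rows corresponding to $\tilde{\mbf{U}}_N$ and $\tilde{\mbf{W}}_N$ to arrive at \eqref{eq:preXN}. You instead work ``bottom up'': you invoke Lemma~\ref{lem:RVfundamental}(iii) at horizon $N$ and then recover the one missing block row from the exact shift identity
$\Hankel_1(\mbf{x}_{[N,T]}) = A\,\Hankel_1(\mbf{x}_{[N-1,T-1]}) + B\,\Hankel_1(\mbf{u}_{[N-1,T-1]}) + \Hankel_1(\mbf{w}_{[N-1,T-1]})$,
which after right-multiplication by $G$ collapses to $A\tilde{\mbf{X}}_{N-1}+B\tilde{\mbf{U}}_{N-1}+\tilde{\mbf{W}}_{N-1}=\tilde{\mbf{X}}_N$; your sufficiency direction via the column-space/linearity argument in $\splx{T-N+1}$ is likewise sound. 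The two approaches buy different things: the paper's appending trick is shorter and stays entirely at the level of the symmetric fundamental lemma, but it consumes the full persistency-of-excitation order $\dimx+N+1$ (appending one column cannot raise the PE order, so that hypothesis is genuinely needed for its argument). Your route makes the mechanism explicit and, as you correctly observe, shows that persistency of excitation of order $\dimx+N$ already suffices for this particular statement --- a mild sharpening, with the stronger order in the paper serving mainly to keep the hypothesis aligned with its proof technique and the later use of the lemma in OCP~\eqref{eq:H_PCE_SOCP}. No gap in your argument.
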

\begin{proof}
Consider appending $\trar{(u,w)}{T-1}$ with arbitrary $u_T$ and $w_T$, we see the new realization trajectory $\trar{(u,w)}{T}$ is at least persistently exciting of order $\dimx +N+1$ by assumption. Then, by Statement (iii) of Lemma \ref{lem:RVfundamental}, there exists $G \in \splx{T-N+1} $ such that
\begin{equation}\label{eq:HankelN}
\begin{bmatrix}
\Hankel_{N+1}(\trar{x}{T})\\
\Hankel_{N+1}(\trar{u}{T})\\
\Hankel_{N+1}(\trar{w}{T})
\end{bmatrix} G = \begin{bmatrix}
\tilde{\mbf{X}}_{[0,N]}\\
\tilde{\mbf{U}}_{[0,N]}\\
\tilde{\mbf{W}}_{[0,N]}
\end{bmatrix}.
\end{equation}
By separating the rows of the Hankel matrix in \eqref{eq:HankelN} that are corresponding to $\tilde{\mbf{U}}_{N}$ and $\tilde{\mbf{W}}_{N}$ 
\begin{equation*}
   \begin{bmatrix}
\Hankel_{N+1}(\trar{x}{T})\\
\Hankel_{N}(\trar{u}{T-1})\\
\hline
\Hankel_{1}(\mbf{u}_{[N,T]})\\
\Hankel_{N}(\trar{w}{T-1})\\
\hline
\Hankel_{1}(\mbf{w}_{[N,T]})\\
\end{bmatrix} G = \begin{bmatrix}
\tilde{\mbf{X}}_{[0,N]}\\
\tilde{\mbf{U}}_{[0,N-1]}\\
\hline
\tilde{\mbf{U}}_{N}\\
\tilde{\mbf{W}}_{[0,N-1]}\\
\hline
\tilde{\mbf{W}}_{N}
\end{bmatrix}.
\end{equation*}
We conclude with \eqref{eq:preXN} by removing rows $\Hankel_{1}(\mbf{u}_{[N,T]})$ and $\Hankel_{1}(\mbf{w}_{[N,T]})$ to avoid representation of $\tilde{\mbf{U}}_N$ and $\tilde{\mbf{W}}_N$.
\end{proof}
The statements (i) and (ii) in Lemma \ref{lem:RVfundamental} can also be extended similarly.
We conclude this section with a brief remark on how past \edit{disturbance} realizations can be estimated. 

\begin{remarkmod}[Estimation of past process \edit{disturbance} realizations]\label{remark:noiseEst}
The current discussion is based on the knowledge of \edit{the past disturbance realizations}. It is useful for the case \edit{when the disturbances are measured a posteriori.} For unmeasured disturbances, one may use the recorded state-input trajectories $\mbf{x}_{[0,T]}$ and $\mbf{u}_{[0,T-1]}$ to approximate the past disturbance realizations by
\begin{itemize}
\item[(i)] the maximum likelihood estimate
\begin{subequations}\label{eq:minerror}
\begin{align}
  \hat{\mbf{w}}_{[0,T-1]}&= \argmin_{\mbf{w}_{[0,T-1]}} -\sum_{k=0}^{T-1} \log p_W(w \inst{k}), \\ \label{eq:leftkern}
   \quad \text{subject to }   \big(\xxdp-\wwd \big)&\bigg( I_T - 
\begin{bmatrix}
\xxd \\
\uud
\end{bmatrix}^\rinv 
\begin{bmatrix}
\xxd \\
\uud
\end{bmatrix}\bigg)=0,
\end{align}
where $p_W$ denotes the probability density function of $W_k$,
\end{subequations}
\item[(ii)]or the least-squares estimate
\begin{align*}
 \hat{\mbf{w}}_{[0,T-1]} = \argmin_{\mbf{w}_{[0,T-1]} } { \mbf{w}^{\top}_{[0,T-1]} \mbf{w}_{[0,T-1]} } , \quad \text{subject to }   \eqref{eq:leftkern}.
\end{align*}
with
\begin{equation}\label{eq:estimdateWd}
    \Hankel_1(\hat{\mbf{w}}_{[0,T-1]}) = \xxdp\bigg( I_T-\begin{bmatrix}
\xxd \\
\uud
\end{bmatrix}^\rinv 
\begin{bmatrix}
\xxd \\
\uud
\end{bmatrix}\bigg) .
\end{equation} 
\end{itemize}
We remark that for Gaussian \edit{disturbances} \eqref{eq:minerror} and \eqref{eq:estimdateWd} are equivalent. However, for uniformly distributed \edit{disturbances}---since the maximum likelihood estimate \eqref{eq:minerror} does not admit a unique solution due to the constant probability density function---one rather considers \eqref{eq:estimdateWd}. \edit{For further details on the disturbance estimation, see~\cite{Pan21s}. } 
\edit{Moreover, one can show that any estimated disturbance sequence satisfying \eqref{eq:leftkern} implicitly determines an LTI system with $(\hat{A},\hat{B})$~\cite{Pan21s}. This means that the Hankel matrix representations, i.e. the results of Lemmas~\ref{lem:RVfundamental} and~\ref{pro:preXN}, also hold for the the state-input-disturbance trajectories with estimated disturbances.}
\End
\end{remarkmod}

\section{Data-driven Stochastic Predictive Control}\label{sec:OCPS}
Now, we are ready to turn towards the main topic of the paper, i.e. data-driven stochastic MPC. First, we present the data-driven counterpart of OCP~\eqref{eq:stochasticOCP}. Then, we propose a data-driven predictive control scheme with the analysis of its closed-loop properties.

\subsection{Data-driven stochastic OCP via PCE}\label{sec:dataOCP}
\edit{In the following, consider $	\pcecoe{v}{[0,L-1]}\doteq [\pcecoe{v}{0\top},\pcecoe{v}{1\top},\dots,\pcecoe{v}{L-1\top}]^\top$ as the vectorization of PCE coefficients over PCE dimensions.}
Let Assumption~\ref{ass:finitenoise} (i) hold. Consider the availability of realization trajectories $\mbf{x}_{[0,T]}$, $\trar{u}{T-1}$, and $\trar{w}{T-1}$ of \eqref{eq:RVdynamics}. Suppose that $\trar{(u,w)}{T-1}$ is persistently exciting of order $N+\dimx +1$. As before, the PCEs with $L \in \N^+$ dimensional polynomial basis are determined by \eqref{eq:OCPbase} and PCE coefficients $\bar{\pce{x}}^{[0,L-1]}\inst k$ and $\pcecoe{w}{[0,L-1]} \inst{i}$, $i \in \I_{[k,k+N-1]}$ are given. Then, the data-driven reformulation of OCP~\eqref{eq:stochasticOCP} using PCE reads
\begin{subequations}\label{eq:H_PCE_SOCP}
\begin{align}
\edit{  V_{N}(\bar{\pce{x}}^{[0,L-1]}_k)
 \doteq} \min_{
\substack{
 \pce{x}^{[0,L-1]}_{[k,k+N]|k} \in \R^{\dimx(N+1)L}, \\ 
 \pce{u}^{[0,L-1]}_{[k,k+N-1]|k} \in \R^{\dimu N L}, \\
  \pcecoe{g}{[0,L-1]} \in \R^{(T-N+1)L} 
  }
 }
  \sum_{i=k}^{k+N-1} \sum_{j=0}^{L-1} \Big(&\| \pcecoe{x}{j} \pred{i}{k}\|^2_Q  +\|\pcecoe{u}{j}\pred{i}{k}\|^2_R\Big)\langle \phi^j,\phi^j\rangle + \sum_{j=0}^{L-1}  \|\pcecoe{x}{j}\pred{k+N}{k}\|^2_P\langle \phi^j,\phi^j\rangle \label{eq:H_PCE_SOCP_obj}    \\   \text{subject to } \quad
\begin{bmatrix}
 \text{ \multirow{2}{*}{$\Hankel_{N+1}(\mbf{x}_{[0,T]})$}}\\
 \\
\Hankel_{N}(\mbf{u}_{[0,T-1]})\\
\Hankel_{N}(\mbf{w}_{[0,T-1]})
\end{bmatrix}    
      \pcecoe{g}{j}&= 
      \begin{bmatrix}
\bar{\pce{x}}^{j}\inst{k}\\      
          \pce{x}^j_{[k+1,k+N]|k}   \\
          \pce{u}^j_{[k,k+N-1]|k}\\
          \pcecoe{w}{j}\inst{[k,k+N-1]} 
      \end{bmatrix}, \quad \forall j\in \set{I}_{[0,L-1]}
    \label{eq:H_PCE_SOCP_hankel}   
 \\
    \pcecoe{x}{0}\pred{i}{k}  \pm \sigma(\varepsilon_x)\sqrt{\sum_{j=1}^{L-1} {(\pcecoe{x}{j}\pred{i}{k})}^2\langle\phi^j,\phi^j \rangle} \in  \mathbb X,
   & \quad  
    \pcecoe{u}{0}\pred{i}{k}  \pm \sigma(\varepsilon_u)\sqrt{\sum_{j=1}^{L-1} {(\pcecoe{u}{j}\pred{i}{k}})^2\langle \phi^j,\phi^j \rangle} \in \mathbb U,  \quad \forall i \in \set{I}_{[k,k+N-1]} \label{eq:PCE_chance}\\
 \pcecoe{x}{0}\pred{k+N}{k}  \in  \Xf, & \quad \sum_{j=1}^{L-1} \pcecoe{x}{j}\pred{k+N}{k}\pcecoe{x}{j\top}\pred{k+N}{k} \langle \phi^j,\phi^j\rangle\leq \Gamma,   \label{eq:PCE_Xf}\\ 
     \pce{u}\pred{i}{k}^{j} = 0,& \quad~ \forall j\in \I_{[L_{x}+(i-k)(L_w-1),L-1]} \label{eq:causality},
    \end{align} 
 where $\pce{x}^j_{[k+1,k+N]|k} \doteq (\pcecoe{x}{j\top}_{k+1|k},\pcecoe{x}{j\top}_{k+2|k},\cdots, \pcecoe{x}{j\top}_{k+N|k})^\top$ and $\pce{u}^j_{[k,k+N-1]|k} \doteq (\pcecoe{u}{j\top}_{k|k},\pcecoe{u}{j\top}_{k+1|k},\cdots, \pcecoe{u}{j\top}_{k+N-1|k})^\top$.
\end{subequations} \vspace*{1mm}

As shown in 
\eqref{eq:H_PCE_SOCP_hankel}
 OCP \eqref{eq:H_PCE_SOCP}, we  utilize \eqref{eq:mixed_funda} in Lemma \ref{lem:RVfundamental} to describe future PCE coefficients  for states and inputs with the recorded \textit{realization data} $\mbf{x}_{[0,T]}$, $\trar{u}{T-1}$,$\trar{w}{T-1}$, and the knowledge of future PCE coefficients of the \edit{disturbances}. 
Observe that the Hankel matrix $\Hankel_{N+1}(\trar{x}{T})$ is used to predict  PCE coefficients of the states from $k$ to $k+N$, and thus $\trar{(u,w)}{T-1}$ is required to be persistently exciting of order $\dimx +N+1$, see Lemma~\ref{pro:preXN}.

The chance constraint reformulation from \eqref{eq:chance} to \eqref{eq:PCE_chance} is standard and follows Farina et al.~\cite{Farina13p} Especially, for Gaussian random variables \eqref{eq:chance} and \eqref{eq:PCE_chance} are equivalent when $\sigma$ is chosen according to the Gaussian cumulative distribution function, cf. Theorem 2.1 of Calafiore et al.~\cite{Calafiore06} Moreover, a conservative choice for all distributions is $\sigma(\varepsilon_z) = \sqrt{(2-\varepsilon_z)/\varepsilon_z}$, $z\in\{x,u\}$. The transformation of the terminal constraints \eqref{eq:terminalCons} to \eqref{eq:PCE_Xf} follows from the properties of finite PCE in \eqref{eq:PCEmoments}. In addition, the causality condition \eqref{eq:causality_input} is considered explicitly in \eqref{eq:causality}.

The following result summarises the conditions for the equivalence of the model-based OCP~\eqref{eq:stochasticOCP} and its data-driven counterpart~\eqref{eq:H_PCE_SOCP}. It is a direct consequence of Theorem 1 of Pan et al.~\cite{Pan21s}

\begin{corollary}[Equivalence of stochastic OCPs]\label{col:equivOCP}  Consider the stochastic OCP \eqref{eq:stochasticOCP} with random variables in a filtered probability space fulfilling \eqref{eq:filtration} and the data-driven reformulation \eqref{eq:H_PCE_SOCP} of OCP \eqref{eq:stochasticOCP}. Let Assumption \ref{ass:AB} and \ref{ass:finitenoise} hold. Suppose that the $L \in \N^+$ dimensional polynomial basis is determined by \eqref{eq:OCPbase} and realization trajectories $\mbf{x}_{[0,T]}$, $\trar{u}{T-1}$ and $\trar{w}{T-1}$, with $\trar{(u,w)}{T-1}$ persistently exciting of order $N+1+\dimx$, are known. 

Moreover, if  \edit{ the chance constraints are box constraints, i.e. $\mbb{Z}=[\underline{z},\overline{z}]$ for $\mbb{Z} \in \{\mbb{X},\mbb{U}\}$ and} the reformulation of chance constraints \eqref{eq:chance} to \eqref{eq:PCE_chance} is exact, then \eqref{eq:stochasticOCP} $\Leftrightarrow$ \eqref{eq:H_PCE_SOCP}, i.e., for any optimal solution $ X^\star\pred{i+1}{k}, U^\star\pred{i}{k}$ for $i\in \set{I}_{[k,k+N-1]}$ to \eqref{eq:stochasticOCP} there exists a $\pcecoe{g}{[0,L-1],\star} \in  \R^{(T-N+1)\cdot L} $ such that  $\pce{x}^{[0,L-1],\star}\pred{i+1}{k},\,
 \pce{u}^{[0,L-1],\star}\pred{i}{k},\,
  \pcecoe{g}{[0,L-1],\star}$, $i \in \I_{[k,k+N-1]}$
is an optimal solution to \eqref{eq:H_PCE_SOCP} \edit{and vice versa}. In particular, it holds that $X^\star\pred{i+1}{k} = \sum_{j=0}^{L-1} \pcecoe{x}{j,\star}_{i+1|k} \phi^j, U^\star\pred{i}{k}  = \sum_{j=0}^{L-1} \pcecoe{u}{j,\star}_{i|k} \phi^j$, $i \in \I_{[k,k+N-1]}$ is an optimal solution to \eqref{eq:stochasticOCP}. \End
\end{corollary}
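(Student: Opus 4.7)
The plan is to establish the equivalence by showing that each ingredient of OCP~\eqref{eq:stochasticOCP} has an exact PCE counterpart in OCP~\eqref{eq:H_PCE_SOCP}, and that Lemma~\ref{pro:preXN} then replaces the model-based dynamics by the data-driven Hankel representation without loss of information. Under Assumption~\ref{ass:finitenoise}, Lemma~\ref{lem:no_truncation_error} guarantees that any state-input trajectory of \eqref{eq:RVdynamics} over the horizon admits an exact PCE with at most $L$ terms in the basis~\eqref{eq:OCPbase}, provided the inputs are expressed in this basis and respect the filtration via~\eqref{eq:causality_input}. Hence the random-variable decision variables of \eqref{eq:stochasticOCP} are in bijection with tuples of PCE coefficients satisfying \eqref{eq:PCEcoesDynamics} together with the causality condition \eqref{eq:causality}. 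This is the first step: rewrite \eqref{eq:stochasticOCP} equivalently as an OCP over PCE coefficients using \eqref{eq:PCEcoesDynamics}.

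Second, I would translate each element of the PCE-coefficient OCP into the data-driven form:
\begin{itemize}
\item[(i)] The cost: using the orthogonality of $\{\phi^j\}$ and the moment formulas in \eqref{eq:PCEmoments}, for any $Z\in\splx{n}$ and symmetric $M\succeq 0$ one has $\mean[\|Z\|_M^2]=\sum_{j=0}^{L-1}\|\pce{z}^j\|_M^2\langle\phi^j,\phi^j\rangle$, which makes \eqref{eq:stochasticOCP_obj} and \eqref{eq:H_PCE_SOCP_obj} identical.
\item[(ii)] The dynamics: applying Lemma~\ref{pro:preXN} componentwise in the PCE index $j$ (which is permissible because the PCE-coefficient dynamics~\eqref{eq:PCEcoesDynamics} share the system matrices of~\eqref{eq:RVdynamics}) shows that \eqref{eq:OCPdynamic} with initial condition $X\pred{k}{k}=\bar X_k$ holds if and only if there exist $\pcecoe{g}{j}$ for $j\in\I_{[0,L-1]}$ satisfying~\eqref{eq:H_PCE_SOCP_hankel}; persistency of excitation of order $N+1+\dimx$ ensures solvability in $\pcecoe{g}{j}$.
\item[(iii)] The terminal constraints: from \eqref{eq:PCEmoments}, $\mean[X\pred{k+N}{k}]=\pcecoe{x}{0}\pred{k+N}{k}$ and $\covar[X\pred{k+N}{k}]=\sum_{j\geq 1}\pcecoe{x}{j}\pred{k+N}{k}\pcecoe{x}{j\top}\pred{k+N}{k}\langle\phi^j,\phi^j\rangle$, so \eqref{eq:terminalCons} and \eqref{eq:PCE_Xf} coincide.
\item[(iv)] The chance constraints: the hypothesis that the reformulation of \eqref{eq:chance} as~\eqref{eq:PCE_chance} is exact for the given box constraints $\mbb{X}$ and $\mbb{U}$ is assumed explicitly.
\end{itemize}

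Finally, I would collect these equivalences and use the bijection between $\mcl L^2$ random variables and their PCE-coefficient tuples in the basis~\eqref{eq:OCPbase} to obtain the bi-directional correspondence of optimal solutions. Concretely, given any optimizer $\{X^\star\pred{i+1}{k},U^\star\pred{i}{k}\}$ of~\eqref{eq:stochasticOCP} I would extract its PCE coefficients in the basis~\eqref{eq:bases}, then invoke Lemma~\ref{lem:RVfundamental}(ii) (or its trivial extension used in Lemma~\ref{pro:preXN}) to produce the vectors $\pcecoe{g}{j,\star}$ witnessing~\eqref{eq:H_PCE_SOCP_hankel}, yielding a feasible point of~\eqref{eq:H_PCE_SOCP} with identical cost; conversely, any optimizer of~\eqref{eq:H_PCE_SOCP} is reassembled into random variables $X^\star\pred{i+1}{k}=\sum_j \pcecoe{x}{j,\star}_{i+1|k}\phi^j$ and analogously for $U^\star$, which inherit feasibility in~\eqref{eq:stochasticOCP} with the same cost. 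The main obstacle is step (ii): one must check that Lemma~\ref{pro:preXN}, stated for random-variable trajectories, applies to each PCE-coefficient trajectory despite the fact that the PCE coefficients are deterministic vectors; this is justified because the Hankel representation depends only on the shared pair $(A,[B~I_{\dimx}])$ and the same recorded realization trajectory, so it transfers to \eqref{eq:PCEcoesDynamics} exactly as in Lemma~\ref{lem:RVfundamental}(ii). The causality condition~\eqref{eq:causality} is precisely what guarantees that the resulting $\splx{\dimu}$-valued inputs are adapted to the filtration~\eqref{eq:filtration}, closing the loop back to \eqref{eq:stochasticOCP}.
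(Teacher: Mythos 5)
Your argument coincides with the paper's: the paper proves this corollary only by citing Theorem~1 of Pan et al., and your reconstruction---exact finite PCEs of all decision variables via Lemma~\ref{lem:no_truncation_error}, the moment identities \eqref{eq:PCEmoments} for the cost and terminal constraints, the assumed exactness of the chance-constraint reformulation, and the componentwise (in the PCE index $j$) application of the extended stochastic fundamental lemma to pass from \eqref{eq:PCEcoesDynamics} to \eqref{eq:H_PCE_SOCP_hankel}---is precisely that argument spelled out. One word of caution: the decision space of \eqref{eq:stochasticOCP} is all of $\splx{\dimu}$ (adapted), which is infinite-dimensional, so your claimed ``bijection'' really holds only between coefficient tuples and the $L$-dimensional span of the basis \eqref{eq:OCPbase}; that the optimum of \eqref{eq:stochasticOCP} is attained in this span is exactly the content the paper delegates to Theorem~1 of Pan et al., so your proof inherits rather than closes that step.
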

\edit{
\begin{remarkmod}[Comparison to model-based formulation] We remark OCP~\eqref{eq:H_PCE_SOCP} is also equivalent to its model-based counterpart by replacing~\eqref{eq:H_PCE_SOCP_hankel} with the PCE coefficient dynamics~\eqref{eq:PCEcoesDynamics}, see Theorem~1 of Pan et al.~\cite{Pan21s} Hence, if one considers nominal settings (i.e. data-driven stochastic MPC with exact disturbance measurements is pitted against model-based stochastic MPC with exact models), the model-based scheme gives the very same solution while being computationally more efficient. However, if one considers that the model is uncertain, e.g.
	\[
X^+ = A(\Theta) X + B(\Theta) U + W
	\]
	and the parameter $\Theta$ is modelled, e.g., via PCE (such that its realization stays constant over the entire horizon), cf.~\cite{Wan2022}, then the proposed scheme has the clear advantage of being conceptually much simpler.  In \edit{our} setting, the proposed approach reduces the burden of combining PCE for $\Theta$ with the ones for $X$, $U$, $W$ by instead resorting to the data-driven fundamental lemma which implies to use PCE for $X$, $U$, and $W$. We refer to more detailed discussions by Faulwasser et al.~\cite{tudo:faulwasser22f} \End
\end{remarkmod}
}

\begin{remarkmod}[Data-driven design of terminal ingredients]\label{remark:dataTerminal}
In the model-based OCP~\eqref{eq:stochasticOCP}  the terminal ingredients $K$, $P$, and $\Gamma$ are constructed using knowledge of $A$ and $B$, see \eqref{eq:modelbasedP} and \eqref{eq:modelbasedgamma}. Next, we utilize the recorded states $\trar{x}{T}$ and inputs $\trar{u}{T-1}$ combined with the estimated disturbance trajectory $\trar{w}{T-1}$ to compute the terminal ingredients without knowledge of $A$ and $B$.
 Suppose $\mbf{(u,w)}_{[0,T-1]}$ is persistently exciting of order more than $\dimx +1$, then there exists a matrix $H \in \R^{T \times \dimx}$ that relates the data $\mbf{x}_{[0,T]}$, $\mbf{u}_{[0,T-1]}$, and $\mbf{w}_{[0,T-1]}$ to a given feedback $K \in \R^{\dimu\times \dimx}$ and to  $(A,B)$ with
\begin{align}\label{eq:IK} 
  \begin{bmatrix}
\xxd \\
\uud
\end{bmatrix}H = \begin{bmatrix}
I_{\dimx} \\
K
\end{bmatrix}\quad& \text{and } \quad 
 \left(\xxdp - \wwd \right)H =  A+BK .   
\end{align}
Equation \eqref{eq:IK} has been proposed by Dörfler et al.~\cite{Dorfler21c} as an extension to Theorem 2 of De Perisis et al.~\cite{DePersis19} including additive process disturbances.

In the following, we  use 
\begin{equation*}
M H \doteq \left(\xxdp - \wwd \right) H
\end{equation*}
as the data-driven counterpart of $A+BK$. Then, we have the following data-driven problem to obtain $H$ and $K$ 
\begin{subequations}
\begin{align*}
\min_{
\substack{
\tilde{P} \in \R^{\dimx\times\dimx}, K\in \R^{\dimx \times \dimu},\\
H \in \R^{T\times \dimx}}
} &\trace\left(Q\tilde{P}+K^\top R K  \tilde{P}\right)\\
 \text{ subject to }\quad \eqref{eq:IK},\quad & \tilde{P}  \succeq I_{\dimx}, \quad
M H \tilde{P} H^\top  M^\top - \tilde{P}  +I_{\dimx} \preceq 0. 
\end{align*}
\end{subequations}
 Its convex reformulation reads \cite{Dorfler21c}
	\begin{subequations}\label{eq:dataPKG}
		\begin{align}
			\min_{
				\substack{
					X_1 \in \R^{\dimu\times\dimu}, X_2 \in \R^{T\times\dimx}
				}
			}\trace\left(Q \xxd X_2\right)&+\trace\left(X_1\right) \\
			\text{subject to }\quad
			\begin{bmatrix}
				\xxd X_2 - I &  M X_2   \\
				\star  & 	\xxd X_2
			\end{bmatrix}\succeq 0, \quad&
			\begin{bmatrix}
				X_1 &  R^{\frac{1}{2}} \uud X_2 \\
				\star & 
				\xxd X_2
			\end{bmatrix}\succeq  0,
		\end{align}
	\end{subequations}
	with $K = \uud X_2 \left(\xxd X_2\right)^{-1}$ and $H = X_2 \left(\xxd X_2\right)^{-1} $.

Similarly, we obtain the data-driven formulation of the Lyapunov equation  \eqref{eq:modelbasedgamma}  to compute $\Gamma$ 
\begin{subequations}\label{eq:gamma}
\begin{align}
M H \Gamma H^\top M^\top - \Gamma +\Ucov &=0.
\end{align}
Further, with $K$ and $H$ we have the following discrete-time Lyapunov equation to determine $P$ 
\begin{align}
P - H^\top M^\top  P M H- K^\top R K &= Q. \label{eq:dataRiccati} 
\end{align}
\end{subequations}
 \End 
\end{remarkmod}

\begin{remarkmod}[Numerical implementation with regularization]\label{rem:slack}
Observe that any small inaccuracy or disturbance in the initial condition caused by numerical error might jeopardize the feasibility of OCP \eqref{eq:H_PCE_SOCP}. To overcome this issue, we add a slack variable $c \in \R^{L}$ in  \eqref{eq:H_PCE_SOCP_hankel}, i.e.,
\[
\begin{bmatrix}
	\text{ \multirow{2}{*}{$\Hankel_{N+1}(\mbf{x}_{[0,T]})$}}\\
	\\
	\Hankel_{N}(\mbf{u}_{[0,T-1]})\\
	\Hankel_{N}(\mbf{w}_{[0,T-1]})
\end{bmatrix}    
\pcecoe{g}{j} = 
\begin{bmatrix}
	\bar{\pce{x}}^{j}\inst{k} + c^j\\      
	\pce{x}^j_{[k+1,k+N]|k}   \\
	\pce{u}^j_{[k,k+N-1]|k}\\
	\pcecoe{w}{j}\inst{[k,k+N-1]} 
\end{bmatrix}, \quad \forall j\in \set{I}_{[0,L-1]},
\]
where $c^j$ denotes the $j$-th element of $c$ in this equation. Consequently, the penalty term $\beta \|c\|_1$ is included in the objective function with $\beta\gg 0$. The use of slack variables is widely considered in deterministic data-driven predictive control, e.g. one-norm penalization~\cite{Coulson2019} and two-norm penalization~\cite{Berberich20}.
\End
\end{remarkmod}

\edit{\begin{remarkmod}[Numerical implementation with multiple-shooting] The basis construction for exact PCEs in Lemma~\ref{lem:no_truncation_error} indicates that the dimension of the basis~\eqref{eq:OCPbase} grows linearly as $N$ increases. Moreover, the number of the rows of the Hankel matrix in the equality constraint \eqref{eq:H_PCE_SOCP_hankel} of OCP~\eqref{eq:H_PCE_SOCP} is $(2\dimx+\dimu)N+\dimx$. Hence, the number of the decision variables in OCP~\eqref{eq:H_PCE_SOCP} grows quadratically with the prediction horizon $N$. Ou et al.~\cite{Ou23} consider a segmented prediction horizon composed of shorter intervals.
 Furthermore, the consecutive solution pieces are coupled with continuity constraints only  matching the first two moments, which leads to a substantial reduction of  the dimension of PCE basis and of the number of decision variables. We remark that the idea of segmenting the whole predicition horizon resembles the classic concept of multiple shooting~\cite{Bock84a}; we refer to O'Dwyer et al.~\cite{o2021data} for the first implementation of multiple shooting in data-driven control. \End
\end{remarkmod}
}

\subsection{ Data-driven stochastic predictive control algorithm}\label{sec:MPCs}
Now, we propose a data-driven stochastic predictive control scheme based on \eqref{eq:H_PCE_SOCP}. Importantly, we will give recursive feasibility guarantees. The basic idea is to consider an alternative feasibility-guaranteeing initial condition in case \eqref{eq:stochasticOCP} is infeasible with the current state realization as its initial condition. For the underlying conceptual framework in model-based stochastic predict control, we refer to Farina et al.\cite{Farina13p}

With the current state measurement $x_k$, one can set the initial condition $\bar{\pce{x}}_k^{[0,L-1]}$ in OCP~\eqref{eq:H_PCE_SOCP} as
\begin{equation}\label{eq:initialxk}
\bar{\pce{x}}^{0}_{k} = x\inst{k},\, \bar{\pce{x}}^{j'}_{k} =0 ,\quad \forall j' \in \I_{[1,L-1]}.
\end{equation}
However, in this case, for process \edit{disturbances} $W_k$ with infinite support, OCP \eqref{eq:H_PCE_SOCP} can be infeasible, e.g., in case of a very large \edit{disturbance} realization $w_{k-1}$. Thus, we provide a backup choice of the initial condition based on the optimal solution of the previous OCP solved. 

\begin{lemma}[Recursive feasibility]\label{lemma:recurFeasi}
Let Assumptions~\ref{ass:AB}--\ref{ass:finitenoise} hold. Suppose that at time instant $k-1$, OCP~\eqref{eq:H_PCE_SOCP} is constructed with a $L \in \N^+$ dimensional polynomial basis $\{\phi^j\}_{j=0}^{L-1}$ determined by \eqref{eq:OCPbase}. If OCP~\eqref{eq:H_PCE_SOCP} is feasible at time instant $k-1$ with initial condition $\bar{\pce{x}}^{[0,L-1]}_{k-1} \in \R^{\dimx L}$, then OCP~\eqref{eq:H_PCE_SOCP} constructed with $L'=L+L_w-1 $ dimensional polynomial basis $\{\phi^j\}_{j=0}^{L'-1}=\{\{\phi^j\}_{j=0}^{L-1}, \{\phi^j_{k+N-1}\}_{j=1}^{L_w-1}\}$
is feasible with the initial condition
\begin{equation}\label{eq:initialXk}
\bar{\pce{x}}^{j}_{k}= \pcecoe{x}{j,\star}\pred{k}{k-1},  \forall j \in \I_{[0,L-1]}, \quad  \bar{\pce{x}}^{j'}_{k}= 0, \forall j' \in \I_{[L,L'-1]},
\end{equation} 
where $\pcecoe{x}{j,\star}\pred{k}{k-1}$ are the prediction of the PCE coefficients for $X_{k}$ with respect to the optimal solution at time instant $k-1$.\End 
\end{lemma}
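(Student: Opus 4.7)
The strategy is to construct an explicit feasible candidate for OCP~\eqref{eq:H_PCE_SOCP} at time $k$ by shifting the optimal solution from time $k-1$ by one step and closing the last stage via the terminal feedback $K$ from Remark~\ref{remark:dataTerminal}. This mirrors the standard model-based recursive feasibility argument of Farina et al.~\cite{Farina13p}, lifted to the PCE-coefficient level and with care for the basis enlargement from $L$ to $L'=L+L_w-1$.

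For $j\in\I_{[0,L-1]}$ I would set $\tilde{\pce x}^{j}\pred{i}{k}\doteq \pcecoe{x}{j,\star}\pred{i}{k-1}$ for $i\in\I_{[k,k+N-1]}$ and $\tilde{\pce u}^{j}\pred{i}{k}\doteq \pcecoe{u}{j,\star}\pred{i}{k-1}$ for $i\in\I_{[k,k+N-2]}$, and then append the terminal stage
\[
\tilde{\pce u}^{j}\pred{k+N-1}{k}\doteq K\,\tilde{\pce x}^{j}\pred{k+N-1}{k},\qquad
\tilde{\pce x}^{j}\pred{k+N}{k}\doteq (A+BK)\,\tilde{\pce x}^{j}\pred{k+N-1}{k}+\pcecoe{w}{j}\inst{k+N-1}.
\]
For the new basis directions $j'\in\I_{[L,L'-1]}$, which are supported exclusively by $W_{k+N-1}$, all candidate coefficients vanish except $\tilde{\pce x}^{j'}\pred{k+N}{k}=\pcecoe{w}{j'}\inst{k+N-1}$. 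By construction the candidate respects the prescribed initial condition~\eqref{eq:initialXk}.

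I would then verify the four constraint groups of~\eqref{eq:H_PCE_SOCP}. For the Hankel equality~\eqref{eq:H_PCE_SOCP_hankel}, each PCE coordinate trajectory is a trajectory of the realization dynamics~\eqref{eq:RelizationDynamics} (on the new directions $j'\ge L$ it is the zero-initial, zero-input response to the single disturbance $W_{k+N-1}$), so Lemma~\ref{pro:preXN} furnishes vectors $\pcecoe{g}{j}$ that solve it. For the chance constraints~\eqref{eq:PCE_chance} with $i\in\I_{[k,k+N-2]}$, the shifted coordinates yield the same mean and variance that were feasible at $k-1$, so~\eqref{eq:PCE_chance} is inherited; for $i=k+N-1$ the random variable $X\pred{k+N-1}{k}$ equals $X^{\star}\pred{k+N-1}{k-1}$, which satisfies $\mean[X]\in\Xf$ and $\covar[X]\leq\Gamma$ by the previous terminal constraint~\eqref{eq:PCE_Xf}, so Assumption~\ref{ass:terminalinX}(ii) delivers both the state chance constraint and the input chance constraint through $U=KX$. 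For the terminal constraint~\eqref{eq:PCE_Xf} at $i=k+N$, the mean lies in $\Xf$ by the $(A+BK)$-invariance from Assumption~\ref{ass:terminalinX}(i), while for the covariance
\[
\covar\bigl[X\pred{k+N}{k}\bigr]=(A+BK)\covar\bigl[X\pred{k+N-1}{k}\bigr](A+BK)^\top+\covar[W_{k+N-1}]\leq (A+BK)\Gamma(A+BK)^\top+\Ucov=\Gamma
\]
by the Lyapunov identity~\eqref{eq:modelbasedgamma} (equivalently~\eqref{eq:gamma}). The causality condition~\eqref{eq:causality} is inherited on the shifted part, and at the last stage $K\tilde{\pce x}^{j}\pred{k+N-1}{k}$ carries no component on basis directions supported by $W_{k+N-1}$, which is exactly what~\eqref{eq:causality} requires.

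The main obstacle I anticipate is the bookkeeping required by the basis enlargement: one has to justify that on the new coordinates $j'\ge L$ the single-disturbance candidate trajectory still lies in the column span of the Hankel matrices, which is where persistency of excitation of order $N+\dimx+1$ together with Lemma~\ref{pro:preXN} is used. Once this is settled, the remaining steps are standard terminal-ingredient manipulations transplanted from random variables to their PCE coordinates.
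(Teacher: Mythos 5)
Your proposal is correct and follows essentially the same route as the paper's proof: shift the previous optimal PCE-coefficient trajectories, close the horizon with the terminal feedback $K$ (the paper writes the data-driven surrogate $MH$ in place of $A+BK$, which coincide by \eqref{eq:IK}), set the new basis directions to zero except for the terminal-state contribution $\pce{w}^{j'}_{k+N-1}$, and verify the terminal mean via Assumption~\ref{ass:terminalinX}(i) and the terminal covariance via the Lyapunov identity \eqref{eq:modelbasedgamma}. If anything, you are more explicit than the paper about the Hankel-representability of the candidate (via Lemma~\ref{pro:preXN}), the intermediate chance constraints, and the causality condition \eqref{eq:causality}, all of which the paper leaves implicit.
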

\begin{proof}
Similar to the arguments in the proof of recursive feasibility for deterministic MPC, we construct a shifted version of the last optimal solution as a candidate solution for the current OCP. In terms of PCE coefficients we obtain
\begin{subequations}\label{eq:feasibleShift}
\begin{align}
\text{for } j \in \I_{[0,L-1]}:  \{\tilde{\pce{u}}^j_{i|k}\}_{i=k}^{k+N-2} & = \{\pcecoe{u}{j,\star}_{i|k-1}\}_{i=k}^{k+N-2}, \quad \tilde{\pce{u}}^j_{k+N-1|k} = K \pcecoe{x}{j,\star}_{k+N-1|k-1}
 \\
\{\tilde{\pce{x}}^j_{i|k} \}_{i=k}^{k+N-1} &= \{ \pcecoe{x}{j,\star}_{i|k-1}\}_{i=k}^{k+N-1}, \quad \tilde{\pce{x}}^j_{k+N|k} = M H \pcecoe{x}{j,\star}_{k+N-1|k-1}, \label{eq:shiftedX} \\ \vspace*{3mm}
 \text{for } j \in \I_{[L,L']}:  \{\tilde{\pce{u}}^j_{i|k}\}_{i=k}^{k+N-1} &= 0,  \quad \{\tilde{\pce{x}}^j_{i|k} \}_{i=k}^{k+N-1}  = 0,\quad \tilde{\pce{x}}^j_{k+N|k} =\pce{w}^j_{k+N-1 }. \label{eq:shiftedAfterL}
\end{align}
\end{subequations}
Here we remark that $MH$ is the data-driven counterpart of $(A+BK)$ as shown in \eqref{eq:IK}. Observe that by considering $\tilde{X}_{k+N|k}=  M H\tilde{X}_{k+N-1|k-1}+ W_{k+N-1}$ for the shifted trajectory, the basis of $W_{k+N-1}$ should be appended to the original basis. That is, we construct a polynomial basis $\{\phi^j\}_{j=0}^{L'-1}=\{\{\phi^j\}_{j=0}^{L-1}, \{\phi^j_{k+N-1}\}_{j=1}^{L_w-1}\}$ with $L'=L+L_w-1 $. 

To the end of feasibility of \eqref{eq:feasibleShift} in OCP~\eqref{eq:H_PCE_SOCP}, we first set $\tilde{\pce{x}}^j_{k|k}$ as the initial condition $\bar{\pce{x}}^{j}_{k}$, which is equivalent to \eqref{eq:initialXk}. Furthermore, with respect to the terminal constraints \eqref{eq:PCE_Xf}, we see by Assumption~\ref{ass:terminalinX}, $\tilde{\pce{x}}^0_{k+N|k}=  M H \pcecoe{x}{0,\star}_{k+N-1|k-1}\in \Xf$ holds since $\pcecoe{x}{0,\star}_{k+N-1|k-1} \in \Xf$. Moreover, for the constraint on the covariance of the terminal state, we have
\begin{equation*}
\begin{aligned}
\covar\big[\tilde{X}\pred{k+N}{k}\big]&= \sum_{j=1}^{L'-1} \tilde{\pce{x}}^{j}\pred{k+N}{k} \tilde{\pce{x}}^{j\top}\pred{k+N}{k} \langle \phi^j,\phi^j\rangle \\
&=  \sum_{j=1}^{L-1} M H\pce{x}^{j,\star}\pred{k+N-1}{k-1}  \pce{x}^{j,\star\top}\pred{k+N-1}{k-1} H^\top M^\top  \langle \phi^j,\phi^j\rangle  + \sum_{j=1}^{L_w-1} \pce{w}^{j}\pce{w}^{j\top}\langle \phi^j_{k+N-1},\phi^j_{k+N-1}\rangle
   \\
   &=  MH\left(\sum_{j=1}^{L-1} \pce{x}^{j,\star}\pred{k+N-1}{k-1} \pce{x}^{j,\star\top}\pred{k+N-1}{k-1} \langle \phi^j,\phi^j\rangle\right)H^\top M^\top  +  \sum_{j=1}^{L_w-1} \pce{w}^{j} \pce{w}^{j\top}\langle \phi^j_{k+N-1},\phi^j_{k+N-1}\rangle.
 \end{aligned} 
 \end{equation*} 
 
Furthermore, exploiting the terminal covariance constraint \eqref{eq:terminalCons} at time instant $k-1$ and the finite (co-)variance property~ \eqref{eq:finitrnoise} of $W_{k+N-1}$ we have 
   \begin{align*}     
\covar\big[\tilde{X}\pred{k+N}{k}\big]=   MH\covar\big[X^\star\pred{k+N-1}{k-1}\big] H^\top M^\top +\covar\big[W_{k+N-1}\big] \stackrel{\eqref{eq:terminalCons}, \eqref{eq:finitrnoise}}{\leq}    
    M H\Gamma H^\top M^\top  + \Ucov \stackrel{\eqref{eq:modelbasedgamma}}{=}
 \Gamma.    
\end{align*}
Note the last equality holds by the definition of $\Gamma$ in \eqref{eq:modelbasedgamma}. Consequently, we conclude that at time instant $k$, OCP \eqref{eq:H_PCE_SOCP} with polynomial basis  $\{\phi^j\}_{j=0}^{L'-1}$ is feasible with \eqref{eq:initialXk} as its initial condition and \eqref{eq:feasibleShift} is a feasible solution.
\end{proof}

\begin{algorithm}[t!]
        \caption{Stochastic data-driven predictive control algorithm with OCP \eqref{eq:H_PCE_SOCP}} \label{alg:datadrivenSMPC}
\algorithmicrequire $T$, $N$, $L_w$, $q\leftarrow 0$, $\tilde{V}_0 \leftarrow +\infty$, $k \leftarrow 0$\\
\textbf{Data collection and pre-processing:}
\begin{algorithmic}[1]
\State Sample $u_k \in \set{U}$, $k\in I_{[0,T-1]}$ randomly for $\trar{u}{T-1}$   \label{step:pre_1}
\State Apply $\trar{u}{T-1}$ to System \ref{eq:RVdynamics}, record  $\trar{x}{T}$ 
\State Estimate $\trar{w}{T-1}$ by \eqref{eq:minerror} or  \eqref{eq:estimdateWd}
\State If $\trar{(u,w)}{T-1}$ is persistently of exciting of order less than $N+n_x+1$, go to Step \ref{step:pre_1}, else go to  Step  \ref{step:pre_5}
\State Determine $P$, $K$, $G$, $\Gamma$ by \eqref{eq:dataPKG} and \eqref{eq:gamma}  with $\trar{u}{T-1}$, $\trar{x}{T}$ and the estimated $\trar{w}{T-1}$\label{step:pre_5}
\State Construct \eqref{eq:H_PCE_SOCP} with $\mbf{x}_{[0,T]}$, $\trar{u}{T-1}$ and $\trar{w}{T-1}$ and $P$, $\Gamma$ 
\end{algorithmic}
\textbf{Online optimization:}
\begin{algorithmic}[1]
\State  Measure or estimate $x_k$ \label{step:MPC_0}
\State \edit{$V_{N}^{\text{m}} \leftarrow +\infty$, $V_{N}^{\text{b}} \leftarrow +\infty$}
\State Measured initial condition: $L \leftarrow 1 + N (L_w -1)$,$\quad$  $ \bar{\pce{x}}^0_{k}\leftarrow  x_k, \, \bar{\pce{x}}^{j'}_k\leftarrow  0, \forall j' \in \I_{[1,L-1]}$, solve \eqref{eq:H_PCE_SOCP} \label{step:selectBegin} 
\If{\eqref{eq:H_PCE_SOCP} is feasible and \edit{$V^{\text{m}} _k\leq \tilde{J}_k$} }   $q \leftarrow 0$
\Else \hspace{0.2cm} backup initial condition: 
\State  $L \leftarrow L_x + (N+q) (L_w -1)$, $q \leftarrow q+1$
\State   $ \bar{\pce{x}}^j_{k}\leftarrow \pce{x}^{j,\star}\pred{k}{k-1},  \forall j \in \I_{[0,L-1]}$, solve \eqref{eq:H_PCE_SOCP} \edit{ and update $V_N^{\text{b}}$}
\EndIf \label{step:selectEnd}
\State \edit{$V_{N,k} \leftarrow \min\left\{V_N^{\text{m}},V_N^{\text{b}}\right\}$}
\State Apply $u^{\text{cl}}_k$, cf. \eqref{eq:scalarFeed} to system \eqref{eq:RVdynamics}, update $\tilde{J}_{k+1}$ and $\pce{x}^{j,\star}\pred{k+1}{k},  \forall j \in \I_{[0,L-1]}$ \label{step:MPC_2}
\State  $k\leftarrow k+1$, go to Step \ref{step:MPC_0}
\end{algorithmic}
\end{algorithm}

Now, combining with the results of previous sections, we propose a data-driven stochastic predictive control algorithm based on OCP~\eqref{eq:H_PCE_SOCP} as summarized in Algorithm~\ref{alg:datadrivenSMPC}. Note that in the data collection and pre-processing phase, random inputs $\trar{u}{T-1}$ are generated to obtain $\trar{x}{T}$, and  to estimate $\trar{w}{T-1}$. They are also used to calculate terminal ingredients as detailed in Remarks~\ref{remark:noiseEst}--\ref{remark:dataTerminal}, respectively. \edit{For the sake of sampling efficiency, we note that one might obtain  $\trar{u}{T-1}$ via closed-loop experiments.\cite{VanWaarde2021beyond,Iannelli2021}}

For the online optimization phase, we first assume OCP~\eqref{eq:H_PCE_SOCP} is feasible with the \edit{measured} initial condition \eqref{eq:initialxk} at time instant $k=0$. Then, we update polynomial bases and initial conditions for each time instant $k \in \N^+$, in Step~\ref{step:selectBegin}--\ref{step:selectEnd}. 
Precisely, inspired by the work of Farina et al.,\cite{Farina13p} we propose a selection strategy between the \edit{measured} initial condition \eqref{eq:initialxk} and the backup initial condition \eqref{eq:initialXk}.
\edit{Henceforth, we denote the optimal value functions~\eqref{eq:H_PCE_SOCP_obj} evaluated with the measured initial condition \eqref{eq:initialxk}, respectively, with the backup initial condition \eqref{eq:initialXk} as $V_{N}^{\text{m}}$ and $V_{N}^{\text{b}}$; they are initialized to be $+\infty$ at time instant $k$. In addition, we denote the cost by evaluating the shifted trajectories solution \eqref{eq:feasibleShift} in the objective \eqref{eq:H_PCE_SOCP_obj} as $\tilde{J}_k$.
First, we consider the measured initial condition \eqref{eq:initialxk}.
} 
In case of OCP~\eqref{eq:H_PCE_SOCP} is feasible and $V_{N}^{\text{m}} \leq \tilde{J}_k$, we accept the optimal solutions. Otherwise, we turn to the backup initial condition \eqref{eq:initialXk}. Doing so, OCP~\eqref{eq:H_PCE_SOCP} is ensured to be feasible as shown in Lemma \ref{lemma:recurFeasi} and \edit{$V_{N}^{\text{b}}\leq \tilde{J}_k$} holds due to optimality. Consequently, the proposed selection strategy of initial conditions implies the following 
\edit{\begin{equation}\label{eq:costDecayBackup}
		V_{N,k} \doteq \min\left\{ V_{N}^{\text{m}},V_{N}^{\text{b}} \right\}\leq \tilde{J}_k,\quad k\in \N
	\end{equation}
}
where $V_{N,k} $ denotes the optimal cost of \eqref{eq:H_PCE_SOCP} after the selection of initial conditions at current time instant $k$.

In addition, as shown in Lemma~\ref{lemma:recurFeasi}, the choice of backup initial condition leads to \edit{a} change of polynomial basis.
\begin{remarkmod}[Polynomial bases in closed loop] \label{rem:polynomials}
 At time instant $k$, the polynomial basis of OCP~\eqref{eq:H_PCE_SOCP} is constructed 
\begin{itemize}
\item[(i)] in case of  the \edit{measured} initial condition \eqref{eq:initialxk},  as
\begin{subequations}\label{eq:CLPoly}
\begin{equation}\label{eq:primePoly}
\{\phi^j\}_{j=0}^{L-1}=\{1, \{\phi_k^j\}_{j=1}^{L_w-1},\cdots, \{\phi_{k+N-1}^j\}_{j=1}^{L_w-1}\},
\end{equation}
 with $L = 1+ N(L_w-1)$ and $L_x = 1$,
\item[(ii)] and in case of the backup initial condition \eqref{eq:initialXk} consecutively $q \in \N^+$ times, 
\begin{equation}
\label{eq:backupPoly}
\{\phi^j\}_{j=0}^{L-1}=\{  1, \underbrace{ \{\phi_{k-q}^j\}_{j=1}^{L_w-1},\cdots,\{\phi_{k-1}^j\}_{j=1}^{L_w-1}}_{\{\phi_{x}^j\}_{j=1}^{L_x-1}}, \{\phi^j_{k}\}_{j=1}^{L_w-1} , \cdots \{\phi_{k+N-1}^j\}_{j=1}^{L_w-1}\}, 
\end{equation}
\end{subequations}
 with $ L = 1+ (N+q)(L_w-1)$ and $L_x = 1+q(L_w-1)$. 
\end{itemize} 
Note that if the controller selects backup initial conditions from time instant $k-q$ to $k$ consecutively, the basis keeps the original polynomials $\{1, \{\phi_{k-q}^j\}_{j=1}^{L_w-1},\cdots, \{\phi_{k-q+N-1}^j\}_{j=1}^{L_w-1}\}$ and new polynomials $\{\{\phi_{k-q+N}^j\}_{j=1}^{L_w-1}, \cdots, \{\phi_{k+N-1}^j\}_{j=1}^{L_w-1}\}$ are appended in the end. Moreover, the polynomial basis $\{\phi_x^j\}_{j=0}^{L_x-1}$ of $\bar{X}_k$ is  $\{1, \{\phi_{k-q}^j\}_{j=1}^{L_w-1},\cdots,\{\phi_{k-1}^j\}_{j=1}^{L_w-1}	\}$ with $L_x =1+ q (L_w-1)$, which also indicates $\bar{X}_k$ is related to the last $q$ process \edit{disturbance}s $W_{k-q}$ $\cdots$ $W_{k-1}$.  \End
\end{remarkmod}

\edit{After selecting initial conditions and solving OCP~\eqref{eq:H_PCE_SOCP}, we note that the solution to \eqref{eq:H_PCE_SOCP} gives the PCE coefficients of the first predicted input $\pce{u}^{j,\star}_{k|k}$ for $j \in I_{[0,L]}$ which can not be applied to the system \eqref{eq:RelizationDynamics}. To determine the feedback $u^{\text{cl}}_k$ to be applied to the realization dynamics \eqref{eq:RelizationDynamics}, consider 
	\begin{equation}\label{eq:scalarFeed}
		u^{\text{cl}}_k \doteq \sum_{j=0}^{L_x-1} \pcecoe{u}{j\star}_{k|k} \phi^j_x \relx.
	\end{equation}
	Note that $\pcecoe{u}{j\star}_{k|k} =0 $ holds for $j \in \I_{[L_x, L-1]}$ due to the causality condition \eqref{eq:causality}. The realizations of polynomials $  \{\phi_{x}^j \relx\}_{j=1}^{L_x-1}=\{ \{\phi_{k-q}\relx\}_{j=1}^{L_w-1},\cdots,\{\phi_k\relx\}_{j=1}^{L_w-1}, \{\phi_{k+1} \relx\}_{j=1}^{L_w-1}\}$ are fitted  a posteriori using the knowledge of the past $q$ \edit{disturbance} realizations. Specifically, we evaluate
	\begin{equation}\label{eq:estimatePhi}
		\min_{\substack{\phi_{k-i}^j\relx \in \R, i \in \I_{[1,q]}, j\in \I_{[1,L_w-1]} 
		}} \sum_{i=1}^q \left\| w_{k-i} -\sum_{j=1}^{L_w-1} \pce{w}^{j} \phi_{k-i}^j\relx\right \|^2 .
	\end{equation}
 Notice that \eqref{eq:estimatePhi} requires measuring/obtaining disturbance realizations. In the case of unmeasured disturbances, we estimate $w_{k-1}$ online by appending $(u_{k-1}, x_{k-1})$ to the offline input-state data $\begin{bmatrix}
 	\xxd \\
 	\uud
 \end{bmatrix}$ in \eqref{eq:estimdateWd}. That is, we employ~\eqref{eq:estimdateWd} with respect to the data matrix $\left[\begin{array}{c|c}
 \xxd & x_{k-1} \\
 \uud & u_{k-1}
\end{array}\right]$ and take the last element of the estimated \edit{disturbance}s as $w_{k-1}$.\cite{Pan21s} \vspace*{3mm}
}

We see that the feedback \eqref{eq:scalarFeed} takes the initial conditions in PCE coefficients and the past $q$ \edit{disturbance} realizations into account. Thus, it still reflects the information of the current state realization when the backup initial conditions are chosen; while Farina et al. \cite{Farina13p} neglect the current state measurement in this case.
While applying the feedback $u^{\text{cl}}_k$ to  the realization system \eqref{eq:RelizationDynamics}, we construct the shifted trajectories \eqref{eq:feasibleShift} based on the current optimal solutions to update $\tilde{J}_{k+1}$ and $\pce{x}^{j,\star}\pred{k+1}{k},  \forall j \in \I_{[0,L-1]}$ for next time instant $k+1$. Hence, due to the selection of initial conditions, the proposed predictive control scheme becomes a dynamic feedback strategy with  $\tilde{J}_{k+1}$ and $\pce{x}^{j,\star}\pred{k+1}{k}, \forall j \in \I_{[0,L-1]}$ as internal memory states of the controller.

\subsection{Closed-loop dynamics and properties}

Now, we formulate the following closed-loop dynamics \edit{resulting from the application of} Algorithm~\ref{alg:datadrivenSMPC}  with a specific initial state $x_{\text{ini}}$ and a \edit{disturbance} sequence $w_k$, $k\in \N$
\edit{\begin{subequations}\label{eq:CL}
		\begin{align}
			\label{eq:sotchasticCLRe}
			x_{k+1} = A x_{k} + B u^\text{cl}_k +w_k, \quad  x_0=x\inst{\text{ini}},\quad  k \in \N.
		\end{align}
Moreover, we obtain the sequence of optimal value functions $V_{N,k} \in \R$, $k \in \N$ with respect to the OCP~\eqref{eq:H_PCE_SOCP} evaluated with the selected initial condition in the closed loop.

 Similarly, with a probabilistic initial state $X_0=X\inst{\text{ini}}$ and the stochastic process disturbances $W_k$, we have the closed-loop dynamics in random variables,
\begin{equation}
\label{eq:sotchasticCL}
X_{k+1} = A X_{k} + B U^\text{cl}_k +W_k, \quad X_0=X\inst{\text{ini}}, \quad k \in \N,
\end{equation}
where conceptually the realization of $U^\text{cl}_k$, $U^\text{cl}_k\relx = u^\text{cl}_k$. In other words, \eqref{eq:sotchasticCLRe} is a realization of \eqref{eq:sotchasticCL} for the uncertainty outcome $\omega \in \Omega$. Similarly, the closed-loop evolution of the optimal value function $V_{N,k}$, $k \in \N$ can be regarded as the realization of the underlying random variable $\mcl{V}_{N,k} \in \spl$ such that $V_{N,k}=\mcl{V}_{N,k} \relx$ for all $k \in \N$. 
\end{subequations}
}
The following proposition establishes recursive feasibility and an asymptotic average cost of the proposed data-driven stochastic predictive control method based on Algorithm \ref{alg:datadrivenSMPC} and OCP \eqref{eq:H_PCE_SOCP}.
\begin{theorem}[Recursive feasibility and average cost bound]\label{thm:costDecay}
 Consider the closed-loop dynamics \eqref{eq:CL} determined by the data-driven stochastic predictive control method based on Algorithm \ref{alg:datadrivenSMPC} and OCP \eqref{eq:H_PCE_SOCP}. Let Assumptions \ref{ass:AB}-- \ref{ass:finitenoise} hold. Consider  $x_{\text{ini}}$ as a  realization of  $X_{\text{ini}}$. Suppose at time instant $k=0$, OCP~\eqref{eq:H_PCE_SOCP} is feasible with the initial condition $\pce{x}_0^{0}=x_{\text{ini}}$, $\pce{x}_0^{j}=0$, $\forall j \in \I_{[0,L-1]}$ with $L = 1+N(L_w-1)$ dimensional polynomial basis  as described in \eqref{eq:primePoly}, then
 \begin{itemize} \item[(i)]
  OCP \eqref{eq:H_PCE_SOCP} is feasible at time instant $k \in \N$ with the selected initial condition $\bar{\pce{x}}^{[0,L-1]}_k$ and updated $L$ dimensional polynomial basis as shown in \eqref{eq:CLPoly}.
 \item[(ii)] \edit{Moreover,  OCP \eqref{eq:H_PCE_SOCP} at consecutive time instants satisfies
 	\begin{equation}\label{eq:costDecay}
 		\mean\big[\mathcal{V}_{N,k+1} - \mcl{V}_{N,k}\big] \leq -  \mean\big[\|X_k\|^2_Q + \|U^\text{cl}_k\|^2_R\big] + \alpha,
 \end{equation}
	with $\alpha\doteq \trace\left( \covar\left[W\right] P \right)$.}
\item[(iii)] \edit{In addition, the average asymptotic cost of the proposed data-driven stochastic predictive control method is bounded from above by $\alpha \in \R^+$,
	\begin{equation}\label{eq:averageCostBound}
		\lim_{k\to \infty} \frac{1}{k} \sum_{i=0}^k
		\mean\big[\|X_i\|^2_Q + \|U^\text{cl}_i\|^2_R\big] \leq \alpha.
	\end{equation}}
\end{itemize}
\end{theorem}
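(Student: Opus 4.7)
Part~(i) follows by a straightforward induction on $k$: the base case $k=0$ is assumed by hypothesis, and for the inductive step Lemma~\ref{lemma:recurFeasi} certifies that the backup initial condition~\eqref{eq:initialXk} together with the augmented PCE basis keeps OCP~\eqref{eq:H_PCE_SOCP} feasible at $k+1$ whenever it was feasible at $k$. Hence, irrespective of whether Steps~\ref{step:selectBegin}--\ref{step:selectEnd} of Algorithm~\ref{alg:datadrivenSMPC} accept the measured candidate~\eqref{eq:initialxk}, the backup fallback is always admissible, which closes the induction and also justifies that $\mcl{V}_{N,k}$ is well defined as a random variable on the probability space for all $k\in\N$.

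For part~(ii), the key device is the shifted candidate~\eqref{eq:feasibleShift} used inside the proof of Lemma~\ref{lemma:recurFeasi}. Denote by $\tilde J_{k+1}$ the value of~\eqref{eq:H_PCE_SOCP_obj} evaluated on this candidate; by~\eqref{eq:costDecayBackup} we have $\mcl{V}_{N,k+1} \leq \tilde J_{k+1}$ pointwise in $\omega$. The core of the argument is then to evaluate the telescoping difference $\tilde J_{k+1}-\mcl{V}_{N,k}$. The stage $i=k$ drops out of $\tilde J_{k+1}$, contributing $-\mean\big[\|X_k\|_Q^2+\|U^\text{cl}_k\|_R^2\big]$ once the first-stage cost of $\mcl{V}_{N,k}$ is identified with the closed-loop stage cost via the causality condition~\eqref{eq:causality} and the PCE feedback~\eqref{eq:scalarFeed}. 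At the newly added stage $i=k+N$ the shifted input $\tilde U_{k+N|k+1}=K X^{\star}_{k+N|k}$ adds $\mean\big[\|X^{\star}_{k+N|k}\|_Q^2+\|KX^{\star}_{k+N|k}\|_R^2\big]$, the new terminal cost is $\mean\big[\|MHX^{\star}_{k+N|k}+W_{k+N}\|_P^2\big]$, and the old terminal cost $\mean\big[\|X^{\star}_{k+N|k}\|_P^2\big]$ is subtracted. I would then substitute the data-driven Lyapunov identity~\eqref{eq:dataRiccati} in the form $H^\top M^\top P M H=P-Q-K^\top R K$, which causes every $X^{\star}_{k+N|k}$-dependent term to cancel. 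The cross term in $\|MHX+W\|_P^2$ vanishes in expectation because $W_{k+N}$ is zero-mean and independent of $X^{\star}_{k+N|k}$ (the latter depending only on $X_\text{ini}$ and $W_k,\dots,W_{k+N-1}$), leaving only $\mean\big[\|W_{k+N}\|_P^2\big]$, which equals $\tfrac{1}{2}\trace(\covar[W]P)$ and is therefore absorbed in the constant $\alpha=\trace(\covar[W]P)$. Taking outer expectations and chaining with $\mean[\mcl{V}_{N,k+1}]\leq\mean[\tilde J_{k+1}]$ then yields~\eqref{eq:costDecay}.

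Part~(iii) is a direct consequence of~\eqref{eq:costDecay}: summing telescopically from $k=0$ to $K-1$ gives
\[
\sum_{k=0}^{K-1}\mean\big[\|X_k\|_Q^2+\|U^\text{cl}_k\|_R^2\big] \;\leq\; K\alpha + \mean[\mcl{V}_{N,0}]-\mean[\mcl{V}_{N,K}],
\]
after which dividing by $K$, using non-negativity of $\mcl{V}_{N,K}$ (which follows from $Q,R\succeq 0$ and $P\succ 0$) to drop the last term, and letting $K\to\infty$ delivers~\eqref{eq:averageCostBound}. The main obstacle I expect is the bookkeeping inside part~(ii), specifically the identification of the first-stage cost of $\mcl{V}_{N,k}$ with $\mean\big[\|X_k\|_Q^2+\|U^\text{cl}_k\|_R^2\big]$ in the \emph{backup} case, where $\bar X_k=X^{\star}_{k|k-1}$ is a genuine random variable rather than a Dirac and the applied feedback~\eqref{eq:scalarFeed} is reconstructed a posteriori via the fit~\eqref{eq:estimatePhi} of past basis functions to realized disturbances; one must argue that the resulting $U^\text{cl}_k$ coincides in the $\mcl{L}^2$ sense with the OCP's first-stage contribution so that the telescoping argument carries over unchanged across both branches of the selection strategy.
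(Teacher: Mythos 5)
Your proposal is correct and follows essentially the same route as the paper: part (i) via Lemma~\ref{lemma:recurFeasi}, part (ii) via the shifted candidate~\eqref{eq:feasibleShift}, the bound $\mcl{V}_{N,k+1}\leq\tilde J_{k+1}$ from~\eqref{eq:costDecayBackup}, and cancellation through the data-driven Lyapunov identity~\eqref{eq:dataRiccati}, and part (iii) by telescoping; the only cosmetic difference is that the paper carries out the part-(ii) computation in PCE coefficients (where the cross term disappears by orthogonality of the appended basis functions $\{\phi^j_{k+N-1}\}$) whereas you argue it vanishes in expectation by zero-mean independence of $W_{k+N}$ — the same fact in two languages. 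The bookkeeping subtlety you flag at the end, identifying the first-stage cost with $\mean\big[\|X_k\|^2_Q+\|U^{\text{cl}}_k\|^2_R\big]$ in the backup branch, is indeed glossed over by the paper as well (it passes from realizations to random variables with the phrase ``generalizing to all uncertain outcomes''), so your awareness of it is a point in your favor rather than a gap.
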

\begin{proof}
First, the statement on recursive feasibility, i.e. Statement (i), directly follows from Lemma \ref{lemma:recurFeasi}. 
Then, with the selection strategy proposed in Algorithm \ref{alg:datadrivenSMPC}, we have condition \eqref{eq:costDecayBackup} be satisfied. By applying it to the left hand side of \eqref{eq:costDecay}, we have
\begin{equation*}
\begin{aligned}
 V_{N,k+1}& - V_{N,k} \leq  \tilde{J}_{k+1}- V_{N,k}   \\
= & \sum_{j=0}^{L-1} \Big( \sum_{i=k+1}^{k+N-1} \big(\| \pcecoe{x}{j,\star} \pred{i}{k}\|^2_Q  +\|\pcecoe{u}{j,\star}\pred{i}{k}\|^2_R \big)+  \|\pce{x}^{j,\star}_{k+N|k}\|_Q^2 +  \|K\pce{x}^{j,\star}_{k+N|k}\|_R^2  +\|M\pce{x}^{j,\star}_{k+N|k} \|_P^2 \Big)\langle \phi^j,\phi^j\rangle +\sum_{j=0}^{L_w-1} \|\pce{w}^j_{k+N}\|_P^2 \langle \phi^j_{k+N},\phi^j_{k+N}\rangle   \\
& -  \sum_{j=0}^{L-1} \Big(\| \bar{\pce{x}}^{j}_{k}\|^2_Q  +\|\pcecoe{u}{j,\star}_{k|k}\|^2_R +  \sum_{i=k+1}^{k+N-1}  \big(\| \pcecoe{x}{j,\star} \pred{i}{k}\|^2_Q  +\|\pcecoe{u}{j,\star}\pred{i}{k}\|^2_R\big)+ \|\pce{x}^{j,\star}_{k+N|k}\|_P^2 \Big)\langle \phi^j,\phi^j\rangle\\
= & \sum_{j=0}^{L-1} \Big( -\|  \bar{\pce{x}}^{j}_{k}\|^2_Q  -\|\pcecoe{u}{j,\star}_{k|k}\|^2_R +
\pce{x}^{j,\star \top}_{k+N|k}
\underbrace{(H^\top M^\top  P M H -P +Q + K^\top R K )}_{=0 ~\text{due to }\eqref{eq:dataRiccati}} \pce{x}^{j,\star}_{k+N|k}  \Big)\langle \phi^j,\phi^j\rangle  +\sum_{j=0}^{L_w-1} \|\pce{w}^j_{k+N}\|_P^2 \langle \phi^j_{k+N},\phi^j_{k+N}\rangle.
\end{aligned}
\end{equation*}

\edit{Thus, we have 
\[
V_{N,k+1} - V_{N,k}  \leq \sum_{j=0}^{L-1} \left( -\|  \bar{\pce{x}}^{j}_{k}\|^2_Q  -\|\pcecoe{u}{j,\star}_{k|k}\|^2_R\right)  \langle \phi^j,\phi^j\rangle +\sum_{j=0}^{L_w-1} \|\pce{w}^j_{k+N}\|_P^2 \langle \phi^j_{k+N},\phi^j_{k+N}\rangle.
\]
for the  selected initial condition $\bar{\pce{x}}^{[0,L-1]}_k$ at time instant $k$. Generalizing the above to all uncertain outcomes of \eqref{eq:sotchasticCL}, we have
\[
\mean\left[\mcl{V}_{N,k+1} - \mcl{V}_{N,k} \right] \leq -\mean\big[\|X_k\|^2_Q + \|U^{\text{cl}}_k\|^2_R\big]  + \mean \big[\|W_{k+N}\|^2_P\big].
\]
\edit{Since $W_k$ are} i.i.d for $k \in \N$, we have $\mean \big[\|W_{k+N}\|^2_P\big] = \trace(\covar[W]P)$. 
}

Furthermore, following the typical argument in stochastic MPC,\cite{Cannon09p} we see that \eqref{eq:costDecay} implies
\begin{align*}
0\leq \lim_{k\to \infty}  \frac{1}{k} \left(\mean\left[\mcl{V}_{N,k} - \mcl{V}_{N,0} \right]\right) 
&\leq  \lim_{k\to \infty} \frac{1}{k}  \sum_{i=0}^k \Big( - \mean\big[\|X_i\|^2_Q + \|U_i^{\text{cl}}\|^2_R\big] +  \mean\big[\|W_{i+N}\|^2_P\big]\Big).
\end{align*}
With $ \alpha =\lim_{k\to \infty} \frac{1}{k}  \sum_{i=0}^k  \mean\big[\|W_{i+N}\|^2_P\big] = \trace\left(\covar\left[W\right]P\right)  $ statement (iii) follows.
\end{proof}

The average asymptotic cost bound in \eqref{eq:averageCostBound} provides a notion of stability and convergence for the stochastic closed-loop dynamics.\cite{Kouvaritakis10e,Cannon09p} We remark that,  due to possible vary large \edit{disturbance} realizations, the convergence \eqref{eq:averageCostBound} does not hold for one specific realization of the closed-loop dynamics \eqref{eq:sotchasticCLRe}. Rather it holds for the random-variable closed-loop dynamics \eqref{eq:sotchasticCL} with its average stage cost evaluated at the current time instant.  
It remains to illustrate the   average asymptotic cost bound \eqref{eq:averageCostBound} and the closed-loop properties of the proposed data-driven stochastic predictive control method 
 in the next section.

\section{Simulation Examples}\label{sec:simulation}
\edit{In this section, we consider two examples to illustrate the efficacy and the closed-loop properties of the proposed data-driven predictive control schemes, cf. Algorithm~\ref{alg:datadrivenSMPC}. In the first (scalar) example, we consider two cases with different distributions of the initial state and the process disturbances under the assumption of exactly measured disturbances. The second example considers a batch reactor system with two inputs and four states. Therein, we compare three variants of Algorithm~\ref{alg:datadrivenSMPC}, i.e.  data-driven variants with measured/estimated disturbances and a model-based variant with a model identified from the input-state data.}
 
\subsection{Scalar dynamics}
We consider the following scalar stochastic system
\begin{equation}\label{eq:sclarsys}
X\inst{k+1} = 2X\inst{k} + U\inst{k} + W\inst{k}, \quad X_0 = X\ini, \quad k \in \set{N}.
\end{equation}
This scalar system is originally used in Gr{\"u}ne~\cite{grune13economic} and then in Ou et al.~\cite{Ou21} 

In  OCP \eqref{eq:H_PCE_SOCP} we consider to stabilize the system \eqref{eq:sclarsys} with the prediction horizon $N = 25$ and the penalty matrices $Q=R=1$. The chance constraints  \eqref{eq:chance} are chosen as $\prob[X\in \set{X}]\leq 1-\varepsilon_x$  and $\prob[U\in \set{U}]\leq 1-\varepsilon_u$ with $\set{X}=[-2,2]$, $\set{U}=[-3,3]$ and $\varepsilon_x=\varepsilon_u=0.1$. Two different settings of the initial state and process \edit{disturbance} are investigated:
\begin{itemize}
	\item Case 1: uniformly distributed $X\ini \sim \mcl{U}(0,2)$ and Gaussian distributed $W_k \sim \mcl{N}(0,0.1^2)$ \label{item:1}
	\item Case 2: beta distributed $X\ini \sim \mcl{B}(0.5,0.5)$ and uniformly distributed $W_k \sim \mcl{U}(-0.173,0.173)$.
\end{itemize}
Note both of the initial states have compact supports and satisfy the state chance constraint. Moreover, the two distributions of \edit{disturbance} have zero-means and the same variance, i.e., $\mbb{V}[W_k]=0.01$. The above settings ensure the feasibility at time instant $k=0$ and consequently also the recursive feasibility , cf the input policy of Algorithm \ref{alg:datadrivenSMPC}.

Next, following the data collection and pre-processing phase of Algorithm \ref{alg:datadrivenSMPC} we choose $T=100$ and record state-input-\edit{disturbance} measurements of system \eqref{eq:sclarsys} for both cases. Note the chosen $\trar{(u,w)}{T-1}$ are ensured to be persistently of exciting of order more than $N+\dimx + 1 = 27$. In addition, we assume \edit{disturbance} realizations are exactly measurable here.

Then, with the recorded data we construct the Hankel matrices in \eqref{eq:H_PCE_SOCP_hankel} and get $P=4.236$, $K = -1.618$ and $\Gamma=0.0117$  for the terminal ingredients by solving \eqref{eq:dataPKG} and \eqref{eq:gamma}. In the online optimization phase, we consider that we have the exact measurement of the current state $x_k$, $k\in \N$ and update the initial conditions as well as the polynomial basis accordingly with respect to Algorithm \ref{alg:datadrivenSMPC} and \eqref{eq:CLPoly}.

\subsubsection*{Case 1: Uniformly distributed initial state and Gaussian distributed \edit{disturbance}} \label{sec:simu_Gauss}
Suppose the initial state $X\ini$ follows uniform distribution $\mcl{U}(0,2)$ and the process \edit{disturbance} $W_k$ is Gaussian $\mcl{N}(0,0.1^2)$ for all $k \in \N$. Then, by choosing $\sigma(\varepsilon_x)$ and $\sigma(\varepsilon_u)$ appropriately according to the Gaussian cumulative distribution function, we obtain equivalent data-driven chance constraints  \eqref{eq:PCE_chance} with $\sigma(\varepsilon_x)= \sigma(\varepsilon_u) =1.645$ for $\varepsilon_x=\varepsilon_u=0.1$. Additionally, referring to Table \ref{tab:askey_scheme}, Assumption~\ref{ass:finitenoise} holds with $L_w=2$ when a Hermite polynomial basis is chosen for $W\inst{k}$. Note that we have different polynomial bases for OCPs at different time instants $k$, cf. \eqref{eq:CLPoly}. For the OCPs that follow \eqref{eq:initialxk}, i.e., directly use measured $x_k$ as initial condition, the polynomial basis are in $L=N(L_w-1)+1=26$ terms with $L_x=1$, cf. Lemma \ref{lem:no_truncation_error}. In contrast, for the OCPs with backup initial conditions \eqref{eq:initialXk}, $L_x$ is equal to the number of steps when backup initial conditions are selected consecutively, i.e.  $q$ in \eqref{eq:CLPoly}.

With $10$ realizations of the \edit{disturbance} sequence and initial state, the left subplots of Figure~\ref{fig:scalar_responses}~(a) display the corresponding closed-loop realization trajectories of system \eqref{eq:sclarsys} driven by Algorithm \ref{alg:datadrivenSMPC}. We see that the realizations of states and inputs rarely violate the system constraints and thus indicate the recursive feasibility. Moreover, the state realizations stay in a neighborhood of the origin after the first approaching phase. Moreover, Figure~\ref{fig:scalar_responses}~(a) depicts the averages of the state and input realization trajectories as well as the average asymptotic cost for a total of 1000 samplings. We denote $\bar{\ell}_k =  10^{-3} \cdot \frac{1}{k}\sum_{m=1}^{1000} \sum_{i=0}^k
\big[\|x_i^m\|^2_Q + \|u_i^m\|^2_R\big]$ as the statistical estimation of the average asymptotic cost in \eqref{eq:averageCostBound} with $(\cdot)^m$ as the $m$-th sample trajectory. Then, as shown in Figure~\ref{fig:scalar_responses}~(a), the stability in the notion of average asymptotic cost bound \eqref{eq:averageCostBound} is clearly illustrated. Furthermore, the realization trajectories of state and input are depicted in the $x-u$ plane in right subplot of Figure~\ref{fig:scalar_responses}~(a) for all of the samples. As one can see, the trajectories are centred at the line $u=Kx$, which suggests the chance constraints are seldom activated when we solve OCP \eqref{eq:H_PCE_SOCP} at each time instant for case 1. An intuitive explanation for this phenomenon is that the Gaussian distributed \edit{disturbance}  has a small variance and zero mean and therefore, the chance constraints are satisfied with rather high probability for the input policy $u=Kx$. Additionally, the evolution of the normalized histograms of the state realizations at time instant $k=0,5,10,15,20$ is shown in Figure~\ref{fig:scalar_hist}~(a), where the vertical axis refers to probability density. It can be seen that the distribution of $X$ keeps stationary over time after the first approaching stage.

\begin{figure}[t!]
	\centering
	\begin{tabular}[b]{c}
		\includegraphics[width=.8\linewidth]{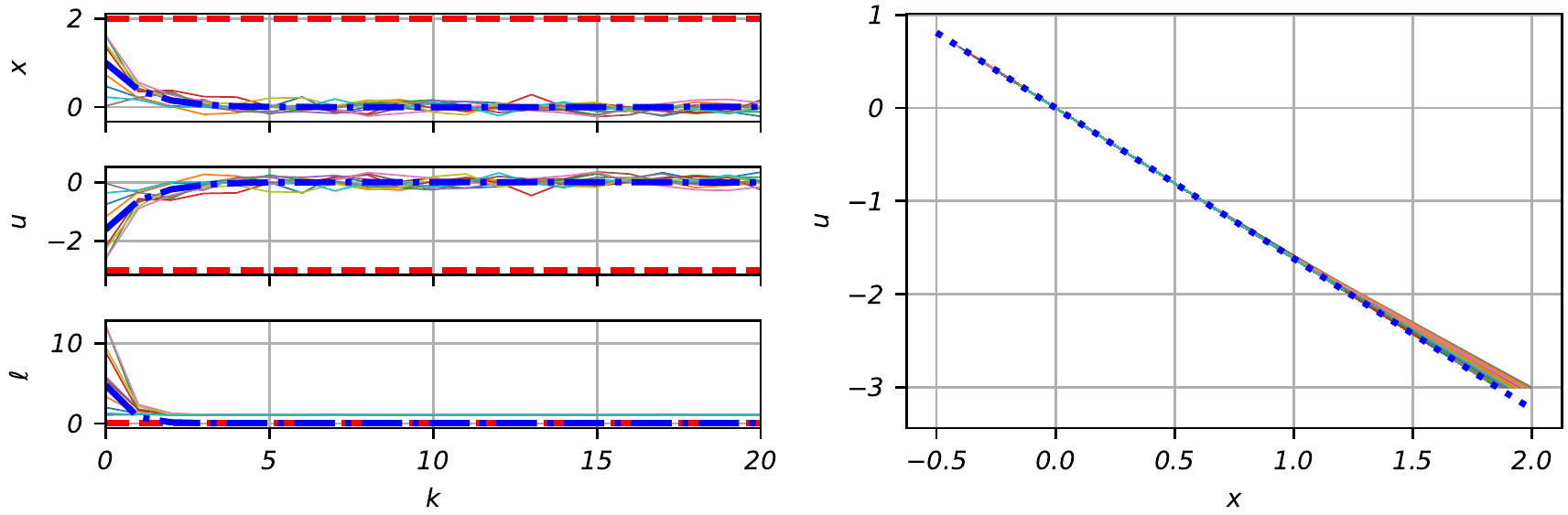} \\
		(a) \\
		\includegraphics[width=.8\linewidth]{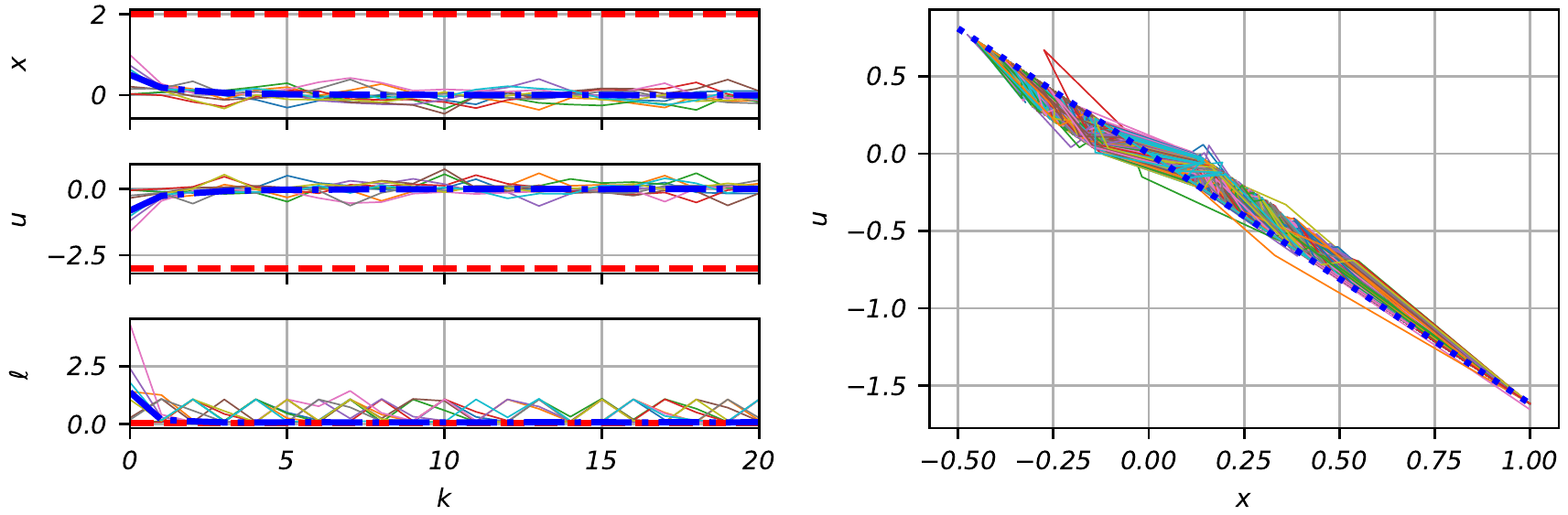} \\
		(b)\\
		\includegraphics[width=.8\linewidth]{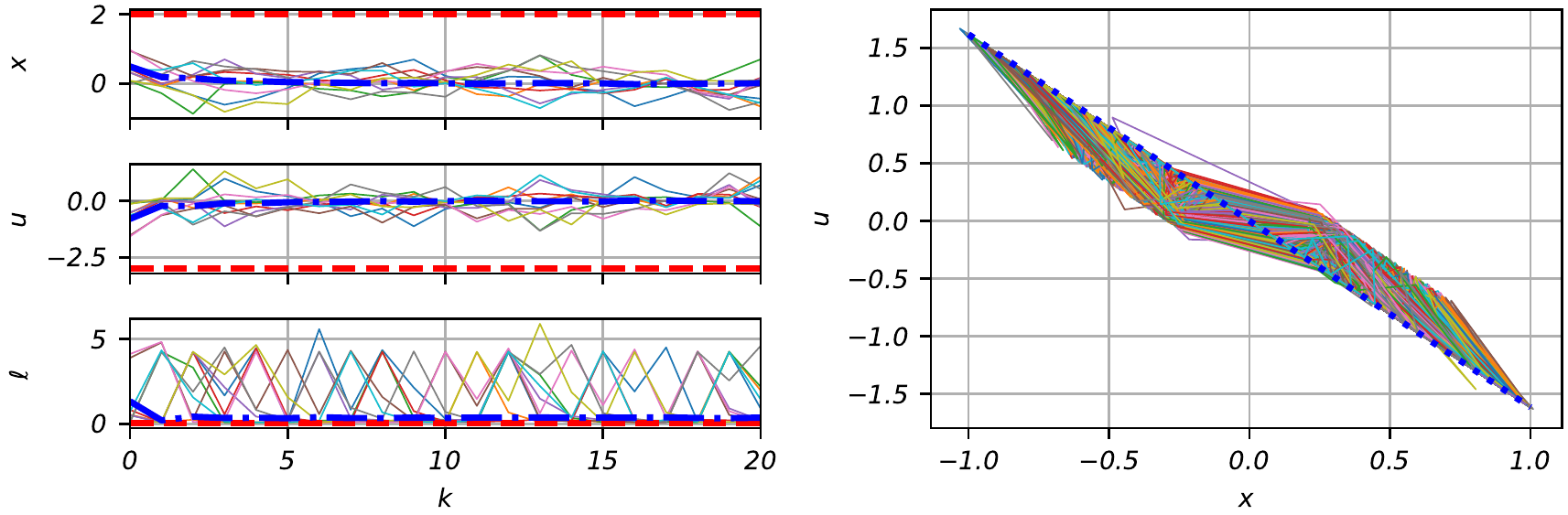} \\
		(c)
	\end{tabular}
	\caption{10 different closed-loop data-driven MPC realizations and the average of 1000 realizations for three cases. (a)~Case~1. (b)~Case~2. (C)~Case~2  with $W_k\sim\mcl{U}(0.346,0.346)$.}
	\label{fig:scalar_responses} 
\end{figure}
\begin{figure}[tb!]
	\centering
	\begin{tabular}[b]{c}
		\includegraphics[width=.4\linewidth]{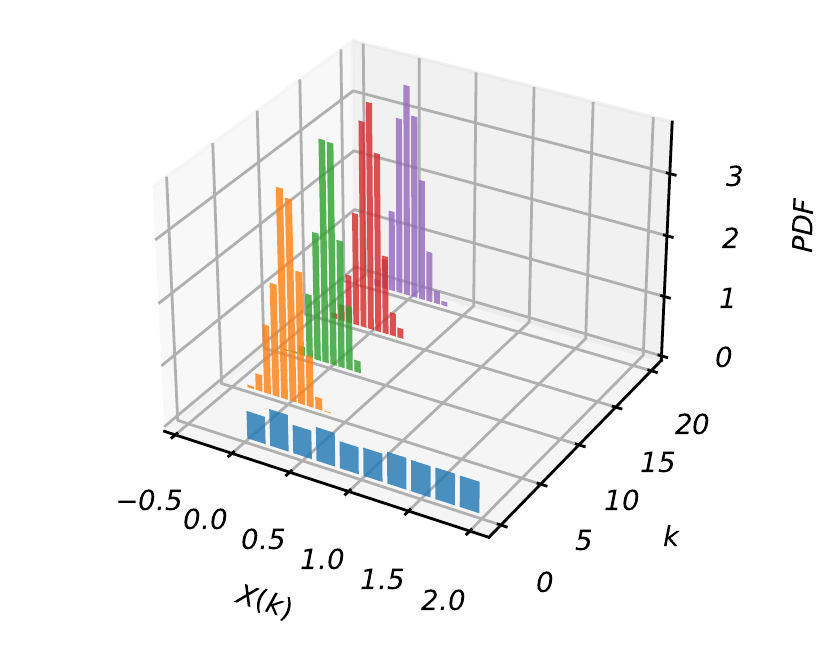} \\
		(a)
	\end{tabular} 
	\begin{tabular}[b]{c}
		\includegraphics[width=.4\linewidth]{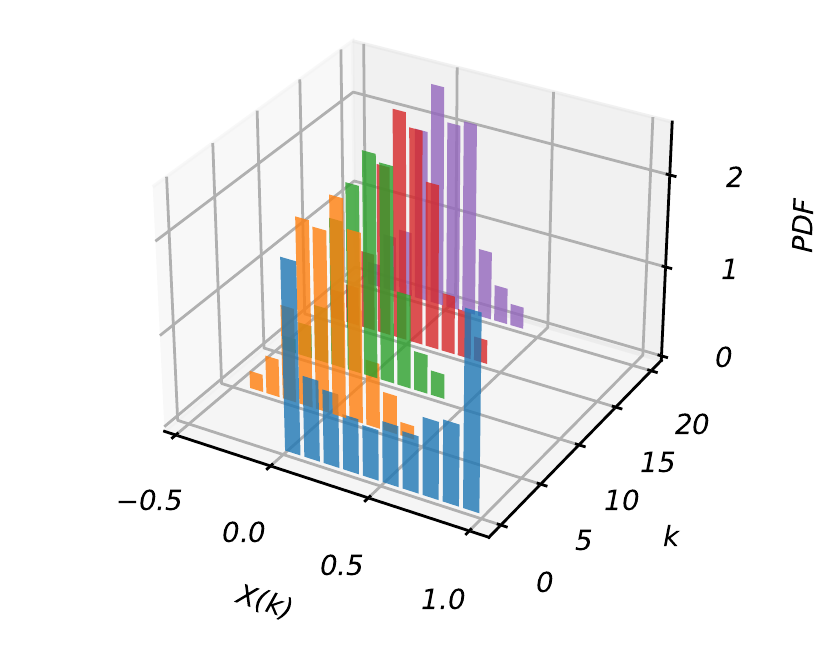} \\
		(b)
	\end{tabular}
	\caption{Histograms of the state from 1000 closed-loop realizations for two cases. (a)~Case~1. (b)~Case~2.}
	\label{fig:scalar_hist} 
\end{figure}

\subsubsection*{Case 2: Beta distributed initial state and uniformly distributed \edit{disturbance}}
Suppose the initial state $X\ini$ follows beta distribution $\mcl{B}(0.5,0.5)$ and the process \edit{disturbance} $W_k$ is uniformly distributed as $\mcl{U}(-0.173,0.173)$ for all $k \in \N$. In terms of a conservative choice, we obtain  $\sigma(\varepsilon_x)= \sigma(\varepsilon_u) =\sqrt{(2-0.1)/0.1} = 4.359$ for chance constraints \eqref{eq:PCE_chance} with $\varepsilon_x=\varepsilon_u=0.1$. Similarly, referring to Table \ref{tab:askey_scheme}, the uniformly distributed $W_k$ admits an exact PCE description with $L_w=2$ in the basis of Legendre polynomials. Therefore, we conclude the same choice of $L$ as Section \ref{sec:simu_Gauss}. For the above settings, the stability in the sense of the average of realizations and distribution is clearly shown in Figure~\ref{fig:scalar_responses}~(b) and Figure~\ref{fig:scalar_hist}~(b). Note the chance constraints are more likely to be activated comparing to case 1 due to the conservative formulation of chance constraints when \edit{disturbance} is uniformly distributed. Therefore, in the right subplot of Figure~\ref{fig:scalar_responses}~(b), more deviations from the line $u=Kx$ are observed for the realization trajectories of state and input in the $x-u$ plane.

Furthermore, we increase the uncertainty of the process \edit{disturbance} by doubling its support, i.e., $W_k\sim\mcl{U}(0.346,0.346)$ and $\mbb{V}[W_k]=0.04$. Hence, the chance constraints are more likely to be activated and we obtain the solutions for 1000 samplings as shown in Figure~\ref{fig:scalar_responses}~(c). Note that in the right subplot, the realization trajectories of state and input that are more widely distributed around the feedback law $u=Kx$ are observed.

\subsection{Batch reactor example}
As a second example, we consider a discretized batch reactor system used by De Persis and Tesi.~\cite{DePersis19} The system matrices which serve as simulated reality are
\[	
		A = \begin{bmatrix} \phantom{-}1.178 & \phantom{-}0.001 & \phantom{-}0.511 & -0.403 \\
			-0.051 & \phantom{-}0.661 & -0.011 & \phantom{-}0.061 \\
			\phantom{-}0.076 & \phantom{-}0.335 & \phantom{-}0.560 & \phantom{-}0.382 \\
			\phantom{-}0\phantom{.000} & \phantom{-}0.335 & \phantom{-}0.089 & \phantom{-}0.849\\
		\end{bmatrix}, \qquad
		B = \begin{bmatrix} \phantom{-}0.004 & -0.087 \\  \phantom{-}0.467 & \phantom{-}0.001 \\ \phantom{-}0.213 & -0.235 \\  \phantom{-}0.213 & -0.016 \end{bmatrix}.
\]
 We consider stabilizing the system at the origin with the prediction horizon $N = 10$ and the penalty matrices $Q=I_{\dimx}$ and $R= I_{\dimu}$.
  We suppose that the i.i.d. disturbances $W_k$ follow a Gaussian distribution with zero mean and covariance $\Sigma_W = 10^{-4}\cdot I_{\dimx}$.  Let $U^i$ denote the $i$-th component of $U$. We impose the chance constraint~\eqref{eq:chance} on $U^1$ with $\prob[U^1\in \set{U}]\leq 1-\varepsilon_u$, $\set{U}=[-2,2]$ and $\varepsilon_u=0.1$. Correspondingly, we find $\sigma(\varepsilon_u)=1.645$ according to the standard normal table.
   Moreover, we specify the terminal region $\Xf$ in \eqref{eq:PCE_Xf} as $\Xf \doteq \{x\in \set{X}| \|x\|_P^2\leq 10^{-2}\}$.  
 
 We remark that the least-square estimate~\eqref{eq:estimdateWd} of the past disturbance realizations implicitly specify an LTI system with 
 \begin{equation}\label{eq:estimatedAB}
 	\left[\hat{A} \, \hat{B}\right]  = \xxdp\begin{bmatrix}
 		\xxd \\
 		\uud
 	\end{bmatrix}^\rinv,
 \end{equation}
where $\mbf{x}_{[0,T]}$ and $\mbf{u}_{[0,T-1]}$ are the recorded state-input trajectories as specified in Remark~\ref{remark:noiseEst}.

In the following, we compare three variants of Algorithm~\ref{alg:datadrivenSMPC}:
 \begin{enumerate}[label=\Roman*]
 	\item Algorithm~\ref{alg:datadrivenSMPC} with exactly measured disturbances
 	 \label{sch:mea}
 	\item Algorithm~\ref{alg:datadrivenSMPC} with disturbances estimated from~\eqref{eq:estimdateWd}
 	 \label{sch:est}
 	\item Algorithm~\ref{alg:datadrivenSMPC} with identified model, i.e., we replace the non-parametric system representation~\eqref{eq:H_PCE_SOCP_hankel} in OCP~\eqref{eq:H_PCE_SOCP} by the PCE coefficient dynamics~\eqref{eq:PCEcoesDynamics} with the identified $\hat{A}$ and $\hat{B}$ from \eqref{eq:estimatedAB}. \label{sch:estAB}
 \end{enumerate}

In the offline data collection phase, we generate the data, i.e. $\mbf{x}_{[0,T]}$ and $\mbf{u}_{[0,T-1]}$, to estimate the disturbances  $\mbf{w}_{[0,T-1]}$ in Variant~\ref{sch:est}, to identify $\hat{A}$ and $\hat{B}$ in Variant~\ref{sch:est}, and to determine the terminal ingredients for all variants. Since the system to be controlled is open-loop unstable, we first apply 20  random samples $u_k$ from $ \mcl{U}(-1,1)^2$ to system~\eqref{eq:RelizationDynamics}. Then, we use the recorded state-input trajectory to estimate the 20 disturbances by~\eqref{eq:estimdateWd}. With the estimated/measured disturbances, we determine a feedback $\tilde K$ from~\eqref{eq:dataPKG} to pre-stabilize the system for further data collection. We continue to sample $980$ inputs with $u_k = \tilde K x_k + v_k$ and $v_k$ sampled from $\mcl{U}(-1,1)^2$ to estimate disturbances by~\eqref{eq:estimdateWd} for Variant~\ref{sch:est} and to identify $\hat{A}$ and $\hat{B}$ by~\eqref{eq:estimatedAB} for Variant~\ref{sch:estAB}. Finally, we use the first $120$ measured inputs, states, and measured/estimated disturbances to construct Hankel matrices for Variants~\ref{sch:mea} and~\ref{sch:est}. The terminal ingredients $K$, $P$ and $\Gamma$ of Variants~\ref{sch:mea} and~\ref{sch:est} are determined from \eqref{eq:dataPKG} and \eqref{eq:gamma} with measured or estimated disturbances, respectively, and Variant~\ref{sch:estAB} use the same as Variant~\ref{sch:est}.

Considering 10 initial conditions $x\ini$ sampled from $ \mcl{U}(-5,5)^4$ and 10 sequences of disturbance realizations, we compute the closed-loop responses of Variants \ref{sch:mea}--\ref{sch:estAB} for 30 time steps. Moreover, we record the computation time of performing Algorithm~\ref{alg:datadrivenSMPC} for each step and evaluate the closed-loop costs $J^{\text{cl}} \doteq \sum_{k=0}^{29}\|x_k\|_Q^2+\|u_k^{\text{cl}}\|_R^2$ for the resulting closed-loop trajectories. Table~\ref{tab:ChemComparison} summarizes the mean and the Standard Deviation (SD) of the computation time per step and of the closed-loop costs for the three variants.
As illustrated in Table~\ref{tab:ChemComparison}, the difference of the resulting closed-loop costs is minuscule for this specific example. However, Variant~\ref{sch:estAB} with estimated system matrices requires much less computation time than the data-driven variants which illustrates the need to develop tailored numerical schemes. One possible approach is to truncate the PCE series. In an illustrative example~\cite[Section V.B]{Pan21s}, we considered a truncation of PCEs from dimension $L=31$ to $L=4$, which halves the computation time with a closed-loop performance loss of $4.38\%$. However, we note that the rigorous analysis of closed-loop robustness with respect to  truncated PCEs is still open. In addition, the computation effort can be reduced via  multiple shooting reformulation~\cite{Ou23} and preconditioning of the Hankel matrices. Since the numerical aspects are beyond the scope of this paper, we leave the design of tailored numerical schemes with robustness guarantees for future work.

 Specifically, Figure~\ref{fig:ChemComparison}~(a) depicts  the input and system state response trajectories for the three variants of Algorithm~\ref{alg:datadrivenSMPC} with respect to  $x\ini=[-2.64,-1.53,-1.87,-4.92]^\top$. As one can see, the response trajectories are almost identical. In addition, Figure~\ref{fig:ChemComparison}~(b) shows the closed-loop evolutions of the infinity norm of resulting slack variable $\|c\|_{\infty}$  for Variants~\ref{sch:mea} and~\ref{sch:est}. In line with Remark~\ref{rem:slack}, the slack variable $c$ is with negligible magnitude and only serves the purpose of the numerical feasibility of~\eqref{eq:H_PCE_SOCP_hankel}.

\begin{table}[t!]
	\caption{Comparison of the computation times in \texttt{julia}  and the closed loop costs for 10 closed-loop realizations. }
	\label{tab:ChemComparison}
	\centering
		\begin{tabular}{ccccc}
			\toprule
			\multirow{2}{*}{Variant} &  \multicolumn{2}{c}{Computation time}  & 	\multicolumn{2}{c}{Closed-loop cost $[-]$}\\
			\cmidrule(lr){2-3} 	\cmidrule(lr){4-5}   & Mean $\SI{}{[s]}$ & SD $\SI{}{[s]}$  & Mean $[-]$ & SD $[-]$\\
			\midrule
			\ref{sch:mea} & 8.03 &3.98 & 276.7121 & 142.9663 \\
			\ref{sch:est} & 8.45 &3.64& 276.7104&142.9650\\
			\ref{sch:estAB} &1.93 &0.86 & 276.7118&142.9636\\ 
			\bottomrule
		\end{tabular}
\end{table}

\begin{figure}[t!]
	\centering
	\begin{tabular}[b]{c}
		\includegraphics[width=.8\linewidth]{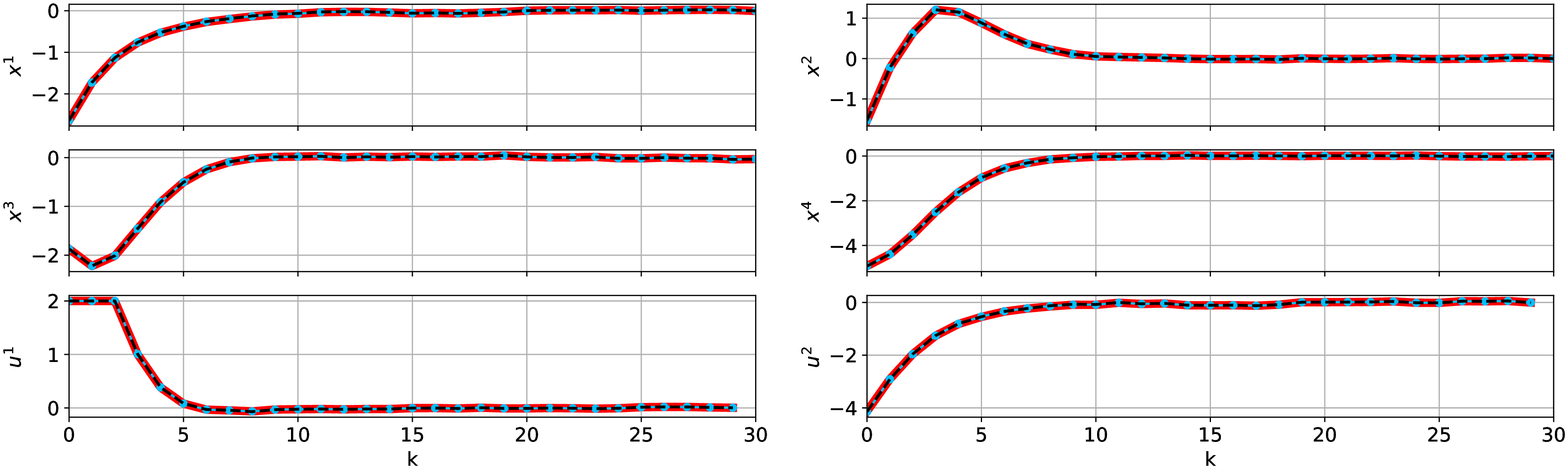}  \\
		(a) \\
			\includegraphics[width=.8\linewidth]{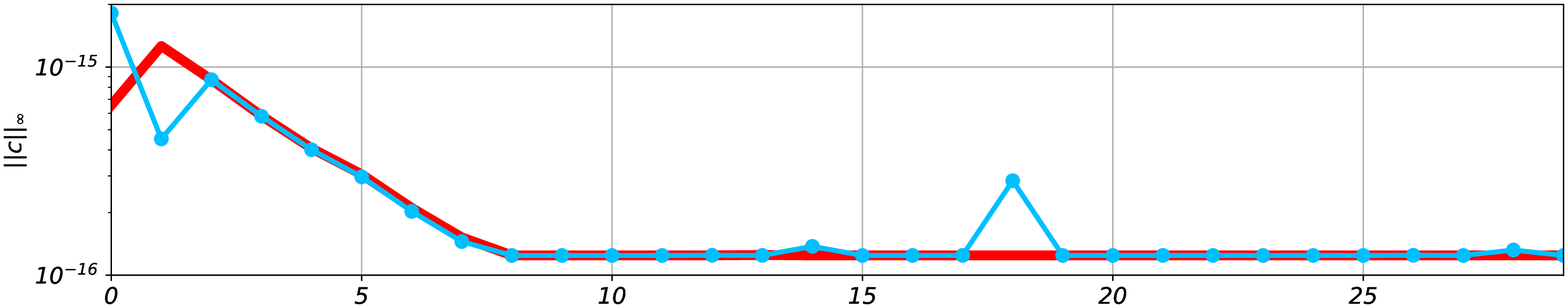}  \\
(b) 
	\end{tabular} 
	\caption{ Closed-loop trajectories of the batch reactor example with Gaussian disturbances for Variants \ref{sch:mea}--\ref{sch:estAB}. Initial condition $x\ini=[-2.64,-1.53,-1.87,-4.92]^\top$. Red-solid line: Variant~\ref{sch:mea}; blue-solid line with circle marker: Variant~\ref{sch:est}; black-dashed line: Variant~\ref{sch:estAB}. }\label{fig:ChemComparison} 
\end{figure}

Finally, we sample a total of $500$ sequences of disturbance realizations and initial conditions from $ \mcl{U}(-5,5)^4$. Then we compute the corresponding closed-loop responses of Variant~\ref{sch:mea}. The time evolution of the (normalized) histograms of the state realizations $x^1$ at $k=0, 5, 10, 15, 20, 25, 30$ is shown in Figure~\ref{fig:chem_hist}, where the vertical axis refers to the (approximated) probability density of $X^1$. As one can see, the proposed control scheme achieves a narrow distribution of~$X^1$ around $0$.

\begin{figure}[t!]
	\centering
		\includegraphics[width=.4\linewidth]{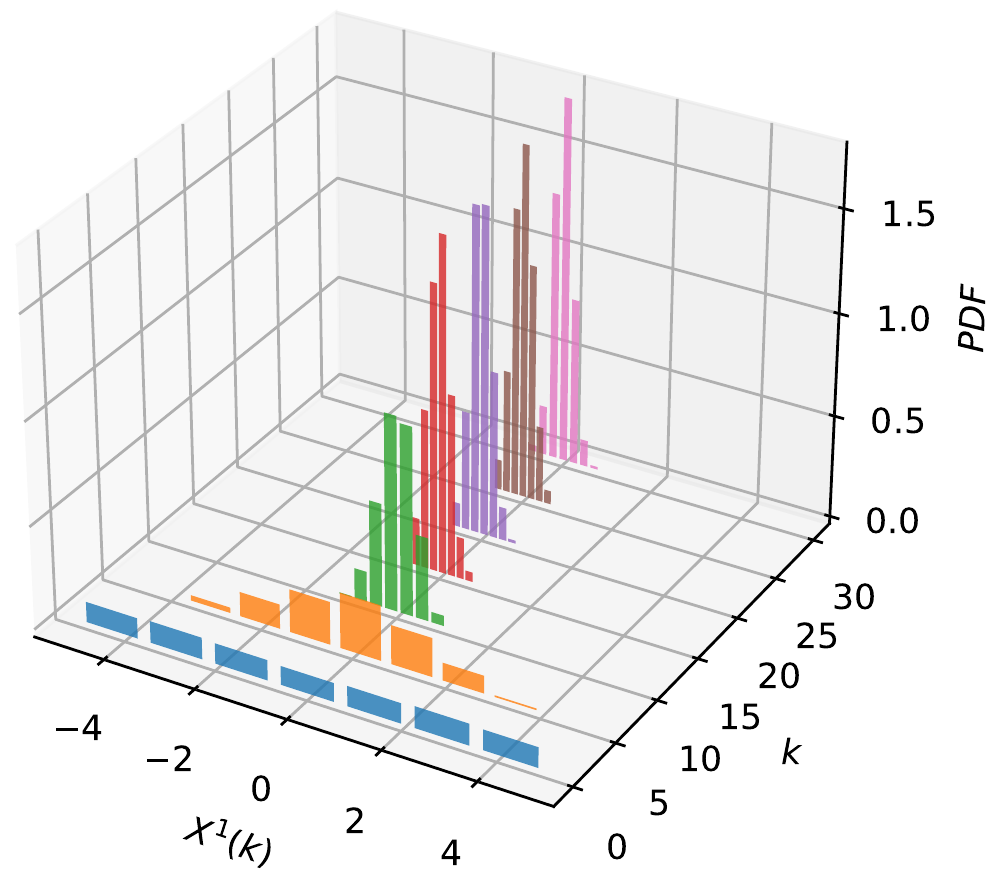}
	\caption{Histogram of $X^1$ from 500 closed-loop realizations for the batch reactor example.}
	\label{fig:chem_hist} 
\end{figure}

\section{Conclusion and outlook}\label{sec:conclusion}
This paper has presented a data-driven stochastic predictive control scheme for stochastic LTI systems with general zero-mean process \edit{disturbances}, which uses past 
\edit{state, input, and disturbance}
realization trajectories to predict future PCE coefficients. This enables a formulation of the predictive control problem with data-driven uncertainty quantification and propagation via
polynomial chaos expansions. Moreover, we considered terminal constraints on the mean value and covariance of the terminal state. Combined with the idea of using backup initial conditions, we proposed a data-driven stochastic predictive control approach with recursive feasibility guarantee and further more with stability property in the sense of asymptotic average cost bound. The simulation results have illustrated the efficacy and closed-loop properties of the proposed scheme. 
 \edit{Future work should consider (i) the design and analysis of data-driven stochastic output-feedback predictive control,\footnote{In a follow-up to this paper we provide first results in this direction.\cite{Pan2022}} (ii)  truncation or approximation strategies of the polynomial basis with robustness guarantee, e.g. relying on sparse PCEs\cite{Luethen2021a} or on active subspaces,\cite{Pan2023} (iii) the robustness analysis with respect to errors in the estimation of past disturbance realizations, and (iv) the real-world application with tailored numerical algorithms.}

\bibliographystyle{IEEEtran}
\bibliography{Stochastic}
\end{document}